\newtheorem{claim}{Claim}
\newtheorem{fact}{Fact}
\newtheorem{theorem}{Theorem}
\newtheorem{lemma}{Lemma}
\newtheorem{proposition}{Proposition}
\newcommand{\simv}{\operatorname{sim}}
\newcommand{\mim}{\operatorname{mim}}
\newcommand{\omim}{\operatorname{omim}}
\newcommand{\Omim}{\operatorname{Omim}}
\newcommand{\simw}{\operatorname{sim-width}}
\newcommand{\mimw}{\operatorname{mim-width}}
\newcommand{\lmw}{\operatorname{linear-mim-width}}
\newcommand{\lsw}{\operatorname{linear-sim-width}}
\newcommand{\lomimw}{\operatorname{linear-omim-width}}
\newcommand{\lOmimw}{\operatorname{linear-Omim-width}}
\newcommand{\omimw}{\operatorname{omim-width}}
\newcommand{\Omimw}{\operatorname{Omim-width}}
\newcommand{\lOmw}{\operatorname{linear-Omim-width}}
\newcommand{\uqc}{\operatorname{\textsc{UQC}}}
\title{On the hardness of recognizing graphs of small mim-width and its variants}
\author{Max Dupr\'e la Tour\thanks{McGill University: \texttt{maxduprelatour@gmail.com}} \and 
Manuel Lafond\thanks{Universit\'e de Sherbrooke: \texttt{manuel.lafond@usherbrooke.ca}} \and 
Ndiam\'e Ndiaye\thanks{McGill University. \texttt{ndiame.ndiaye@mail.mcgill.ca}} }
\date{}
\begin{document}

\maketitle

\begin{abstract}
    The mim-width of a graph is a powerful structural parameter that, when bounded by a constant, allows several hard problems to be polynomial-time solvable - with a recent meta-theorem encompassing a large class of problems [SODA2023].
Since its introduction, several variants such as sim-width and omim-width were developed, along with a linear version of these parameters.
It was recently shown that mim-width and all these variants all paraNP-hard, a consequence of the NP-hardness of distinguishing between graphs of linear mim-width at most 1211 and graphs of sim-width at least 1216 [ICALP2025].  The complexity of recognizing graphs of small width, particularly those close to $1$, remained open, despite their especially attractive algorithmic applications.  

In this work, we show that the width recognition problems remain NP-hard even on small widths.  Specifically, after introducing the novel parameter Omim-width sandwiched between omim-width and mim-width, we show that: (1) deciding whether a graph has sim-width = 1, omim-width = 1, or Omin-width = 1 is NP-hard, and the same is true for their linear variants; (2) the problems of deciding whether mim-width $\leq$ 2 or linear mim-width $\leq$ 2 are both NP-hard.  Interestingly, our reductions are relatively simple and are from the Unrooted Quartet Consistency problem, which is of great interest in computational biology but is not commonly used (if ever) in the theory of algorithms.

\end{abstract}

\section{Introduction}

Many hard problems on graphs become easier when the underlying graph has a simple structure, and over the years several “width’’ parameters have been developed to capture such structure. One of these is the mim-width (maximum induced matching width), introduced by Vatshelle~\cite{vatshelle2012new}. A branch decomposition of a graph $G$ is a ternary tree whose leaves represent the vertices of $G$, and each of its edges naturally defines a cut of the graph. For any such decomposition, each cut induces a bipartite graph, in which we consider the size of a maximum induced matching. The mim-width of $G$ is then the minimum, over all branch decompositions of $G$, of the maximum size of such an induced matching across all cuts of the decomposition~\cite{vatshelle2012new,belmonte2013graph}.


It is a powerful parameter that can be constant even on graphs of high clique-width, and several algorithmic problems are polynomial-time solvable on graphs of constant mim-width~\cite{jaffke2020mim,jaffke2020mimii,jaffke2019mimiii}, under the assumption that a branch decomposition is given.  This is notably witnessed by a recent logic-based meta-theorem that solves many problems in polynomial-time under these assumptions~\cite{bergougnoux2023logic}.

Since its inception, several variants of mim-width were developed: the \emph{sim-width} parameter is the same, except that edges of $G$ within both sides of the cut are kept~\cite{kang2017width,munaro2023algorithmic}; the \emph{one-sided mim-width} ($\omimw$) keeps only the edges within either side of the cut and takes the minimum~\cite{bergougnoux2023new}; the \emph{large} one-sided mim-width ($\Omimw$), which we introduce in this work, instead takes the maximum; and the \emph{linear} variant of all these parameters require the branch decomposition to be a caterpillar (see precise definitions below).  

Recognizing graphs of mim-width at most $k$ was quickly established to be W[1]-hard in parameter $k$~\cite{saether2016hardness}, but the question of XP membership was posed several times since the creation of the parameter (in addition to the previous references, this was also raised in~\cite{bergougnoux2022node,bergougnoux2023new,otachi_et_al:LIPIcs.SWAT.2024.38,bergougnoux2023logic}).  The same remained open for all the aforementioned variants of mim-width, except the the linear sim-width = 1 case shown NP-hard in~\cite{Ziedan2018}.  All these questions were finally answered by Bergougnoux, Bonnet, and  Duronin~\cite{bergougnoux_et_al:LIPIcs.ICALP.2025.25}, where the authors showed that it is NP-hard to distinguish between graphs of linear mim-width at most 1211 and graphs of sim-width at least 1216, implying the paraNP-hardness of $\mimw$ but also of all the above variants.

It is still open whether there is a constant $c < 1211$ such that recognizing graphs of mim-width (and variants) at most $c$ is in P.  Determining whether such a $c$ exists is quite relevant, because graphs of small width are those that matter for algorithmic applications.  
In this work, we reduce this knowledge gap significantly by proving the following results:

\begin{restatable}{theorem}{thmsim}\label{thm:sim}
    The problem of deciding whether a graph has sim-width, omim-width, or Omim-width equal to $1$ is NP-complete. Moreover, assuming the ETH, it cannot be solved in time $2^{o(n)}$, where $n$ is the number of vertices of the graph. The same holds for the linear variant of these three parameters.
\end{restatable}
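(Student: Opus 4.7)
The plan is to reduce from the \textsc{Unrooted Quartet Consistency} ($\uqc$) problem: given a set $Q$ of quartets $ab|cd$ over a taxon set $X$, decide whether some unrooted tree $T$ on leaf set $X$ displays every quartet in $Q$. This problem is known to be NP-hard, and standard reductions from \textsc{3-SAT} yield linear-size instances, giving a $2^{\Omega(|X|+|Q|)}$ ETH lower bound that our reduction would inherit by producing $G_Q$ with $|V(G_Q)| = O(|X| + |Q|)$.

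Given $(X, Q)$, I would construct $G_Q$ where each taxon $x \in X$ has a ``taxon gadget'' $H_x$ --- a clique with one vertex per quartet containing $x$, to force $V(H_x)$ to stay on one side of every low-width cut --- and each quartet $q = ab|cd$ contributes two non-adjacent edges $a_q b_q$ and $c_q d_q$ between the four relevant gadgets, using four fresh pick vertices. The hardness direction is transparent: if a cut separates $\{a, c\}$ from $\{b, d\}$ (thus violating $q$), then both $a_q b_q$ and $c_q d_q$ cross the cut, and by construction no other edge connects $\{a_q, b_q, c_q, d_q\}$, so these two edges induce a $2$-matching even when within-side edges are counted; hence $\simw(G_Q) \geq 2$ (and therefore also $\omimw(G_Q), \Omimw(G_Q) \geq 2$). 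The cliques force any bounded-width decomposition to partition into taxon gadgets up to bounded corrections, so $Q$-inconsistency forces some bad cut.

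Conversely, given a displaying tree $T$, one builds a branch decomposition of $G_Q$ by replacing each leaf $x$ of $T$ by an optimal decomposition of $H_x$. Any resulting cut is either internal to some $H_x$ (width $1$ since cliques have sim-, omim-, and Omim-width $1$) or along an edge of $T$, which must split $\{a, b, c, d\}$ compatibly with $q$, so at most one of the two quartet edges of $q$ crosses. The delicate step is to argue that these per-quartet crossings do not stack into an induced $2$-matching across a single $T$-cut: crossings whose endpoints share a taxon gadget see a within-side clique edge that breaks induction, and for crossings coming from quartets with pairwise disjoint taxa one would add small ``binder'' edges between pairs of taxon gadgets that destroy inter-gadget induced $2$-matchings without enabling any new wrong-quartet witnesses.

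The main obstacle is precisely controlling this last point --- calibrating the binders (or, alternatively, enriching each $H_x$) so that every potential induced matching formed by simultaneously-firing quartet gadgets is killed, without breaking the hardness direction. Once this is achieved, uniformity across the three variants follows from the chain $\simw \leq \omimw \leq \Omimw$: the hardness direction yields $\simw \geq 2$ on inconsistent instances (so all three are $\geq 2$) and the easiness direction yields $\Omimw \leq 1$ on consistent instances (so all three equal $1$). For the linear variants, I would reduce from a caterpillar version of $\uqc$ requiring a displaying caterpillar tree, which can be proved NP-hard under ETH by a direct adaptation of the classical reduction; the same gadgets then apply, using caterpillar local decompositions of each $H_x$ (available since cliques also have linear sim-, omim-, and Omim-width $1$).
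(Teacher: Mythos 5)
Your high-level strategy is the same as the paper's: reduce from \textsc{UQC}, and exploit the chain $\simw \leq \omimw \leq \Omimw$ so that a single construction handles all three parameters (showing $\Omimw \leq 1$ on consistent instances and $\simw \geq 2$ on inconsistent ones). However, two genuine gaps remain, one of which you acknowledge yourself.

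The first gap is the one you flag: the ``binder'' edges needed to kill cross-gadget induced $2$-matchings in the consistency direction are left unspecified, and this is precisely where the reduction lives or dies. The paper sidesteps the issue with a much leaner construction: each taxon $p_i$ is a \emph{single} vertex (so $P$ is an independent set, not a collection of cliques), each quartet $q = [p_i p_j \mid p_k p_\ell]$ contributes two length-3 paths $p_i \text{--} u_i^q \text{--} u_j^q \text{--} p_j$ and $p_k \text{--} u_k^q \text{--} u_\ell^q \text{--} p_\ell$, and the binders are \emph{all} edges between $u$-vertices of distinct quartets. This makes $U$ the complement of a disjoint union of $C_4$'s, so any three vertices of $U$ carry an edge, which immediately caps the global induced matching number of $G$ at $2$ (hence $\simw(G) \leq 2$ unconditionally) and forces every cross-gadget pair of cut edges to be blocked. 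Your variant, where each $H_x$ is a clique with one vertex per quartet containing $x$, also raises a subtler problem for the hardness direction: for sim-width a cut through a clique contributes nothing (matchings inside a clique are never induced), so the cliques do \emph{not} force the taxon gadgets to stay contiguous in a cheap decomposition, and it is then unclear how you extract a single well-defined tree topology on $X$ from an arbitrary branch decomposition.

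The second gap is the linear variants. You propose reducing from a separate ``caterpillar-UQC'' problem, but this creates a mismatch in the NO-case: if the caterpillar-UQC instance is unsatisfiable yet some non-caterpillar tree displays $Q$, your $\simw \geq 2$ argument (which requires a violated quartet in \emph{every} branch decomposition) does not apply. The paper avoids inventing a new source problem by using a stronger property of Steel's original reduction from \textsc{Betweenness}: on YES-instances of \textsc{Betweenness} the produced \textsc{UQC} instance is satisfiable by a \emph{caterpillar}. Thus one reduction handles all six problems simultaneously: YES gives $\lOmimw(G) = 1$ (hence all six parameters are $1$), and NO gives $\simw_G(T) = 2$ for every branch decomposition $T$ whatsoever (hence all six parameters are $\geq 2$). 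You should look at that caterpillar property of Steel's reduction rather than trying to re-prove hardness of a caterpillar-restricted \textsc{UQC}.
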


\begin{restatable}{theorem}{thmmim}\label{thm:mim}
    The problem of deciding whether a graph has mim-width or linear mim-width at most 2 is NP-complete. Moreover, assuming the ETH, it cannot be solved in time $2^{o(n)}$, where $n$ is the number of vertices of the graph.
\end{restatable}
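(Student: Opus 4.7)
I would reduce from the \uqc problem already used in the proof of Theorem~\ref{thm:sim}, exploiting the fact that a branch decomposition of a graph $G$ is essentially an unrooted binary tree whose leaves are the vertices of $G$. Given an instance $(L,\mathcal{Q})$ of \uqc, I would let $V(G) = L \cup \bigcup_{q\in\mathcal{Q}} H_q$, where $H_q$ is a small ``quartet gadget'' attached to $a,b,c,d$ for each $q = ab|cd \in \mathcal{Q}$. The picture is that contracting each leaf-arm of a branch decomposition of $G$ yields an unrooted binary tree on $L$, and conversely any tree on $L$ extends to a full branch decomposition by hanging the gadgets off the correct arms. The aim is to arrange the gadgets so that $\mimw(G) \le 2$ iff the tree obtained by contraction displays every quartet of $\mathcal{Q}$.

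The heart of the reduction is the design of the gadgets $H_q$. For $q = ab|cd$, I want the following dichotomy: if the restriction of the branch decomposition to $\{a,b,c,d\}$ is $ac|bd$ or $ad|bc$ (``wrong''), then the vertices of $H_q \cup \{a,b,c,d\}$ expose an induced matching of size $3$ across the offending cut, pushing the mim-width above $2$; if the restriction is the correct $ab|cd$, every cut of the natural extension sees at most two independent crossing edges from $H_q \cup \{a,b,c,d\}$. My starting point would be a constant-size $H_q$ consisting of a few private vertices with adjacencies to strategically chosen subsets of $\{a,b,c,d\}$, chosen so that splitting $\{a,c\}$ from $\{b,d\}$ creates three pairwise non-adjacent crossing edges (one from the gadget itself, plus pairs like $ax_q, by_q$ that only become independent under the wrong split). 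The correct split, by contrast, keeps the dangerous pairs on the same side and caps the induced matching at $2$.

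The main obstacle I expect is \emph{non-interference} between gadgets attached to different quartets. A single cut in a branch decomposition of $G$ sees contributions from many gadgets simultaneously, so I must ensure (a) that in the sound direction, the union over all $H_q$ of their crossing edges across any cut still has maximum induced matching at most $2$ (not just each gadget individually), and (b) that in the completeness direction, the size-$3$ induced matching forced by a wrongly displayed quartet is not accidentally destroyed by extra edges coming from other gadgets. The standard cure, which I would follow, is to give each $H_q$ private gadget vertices and to share leaves of $L$ only where their limited degree prevents forming undesired large matchings; verifying globally that the two requirements hold for the entire graph is the main technical content.

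For the linear mim-width version I would either adapt the reduction to a caterpillar variant of \uqc (which is NP-hard via standard reductions involving linear orderings), or add a ``spine'' gadget that forces any branch decomposition of $G$ of mim-width at most $2$ to be a caterpillar, so that the same construction yields $\lmw(G) \le 2$ in the consistent case. Since the whole reduction is linear in $|L| + |\mathcal{Q}|$, the ETH lower bound for \uqc (inherited from $3$-SAT via known reductions) transfers directly to the stated $2^{o(n)}$ bound for mim-width and linear mim-width at most $2$.
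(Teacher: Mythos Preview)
Your high-level plan matches the paper's approach: reduce from \textsc{UQC}, attach a constant-size gadget per quartet, and argue that a wrong quartet split forces a $3$-induced matching while a consistent tree keeps every cut at $\mim \le 2$. However, your proposed cure for non-interference is precisely the step that fails, and it is the crux of the construction.

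You write that the ``standard cure'' is to make the $H_q$ vertices private and rely on the limited degree of the shared leaves. But if the gadget vertices of two different quartets $q,q'$ have \emph{no} edges between them, then a single cut that separates vertices of $H_q$ and also separates vertices of $H_{q'}$ will typically expose one crossing edge from each gadget, and those two edges are automatically an induced matching (no cross-edges exist). With many quartets this blows the mim-value up far beyond~$2$ even in the consistent case. The paper's actual cure is the \emph{opposite} of privacy: it adds ``blocking edges'' so that gadget vertices from different quartets are nearly completely connected (the $U$-vertices, and in the mim-width reduction also a clique $\Gamma$ of extra vertices and all $U$--$\Gamma$ edges across distinct quartets). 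This guarantees that two crossing edges from different gadgets can never be simultaneously induced. The delicate point---which the paper discusses explicitly as the reason this does \emph{not} extend to mim-width~$1$---is that these blocking edges can themselves form a $2$-induced matching across a cut, so one can only hope for $\mim \le 2$; verifying that it never reaches~$3$ in the consistent case is the real case analysis you are waving at.

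Two further specifics you are missing. First, the paper uses \emph{different} graphs for the two parameters: the graph $G$ works for $\lmw \le 2$, but for $\mimw \le 2$ an additional universal vertex $\omega$ (adjacent to all $u$-vertices) is required to push through the hardness direction when the decomposition is not a caterpillar; the corresponding case analysis (Lemma~\ref{lem:H3}) is substantially longer. Second, for the linear case the paper does not use a spine gadget; it exploits Steel's theorem (Theorem~\ref{thm:steel}) that YES-instances of \textsc{Betweenness} yield \textsc{UQC} instances solvable by a \emph{caterpillar}, so the reduction is formally from \textsc{Betweenness} and the caterpillar decomposition in the sound direction comes for free. Your ``caterpillar variant of \textsc{UQC}'' suggestion is essentially this, but the spine-gadget alternative would add another layer of difficulty you have not addressed.
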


Note that the ETH refers to the Exponential Time Hypothesis~\cite{impagliazzo2001complexity}.
We also point out that the new parameter $\Omimw$ is a lower bound on $\mimw$, and an upper bound on $\omimw$, which in turn is an upper bound on $\simw$.

Our reductions are relatively simple and arguably less involved than in~\cite{bergougnoux_et_al:LIPIcs.ICALP.2025.25}, which require a series of reductions with complex gadgets.
Another unique aspect of our reductions is that they are all from the Unrooted Quartet Consistency (\textsc{UQC}) problem, shown NP-hard in~\cite{Steel1992}.  The input to \textsc{UQC} is a set of trees on four leaves, and one must decide whether there is a single tree that contains all of them.

This is not a common problem to reduce from, as \textsc{UQC} is often more seen as an ``end-user'' problem, in the sense that it is mostly popular for its bioinformatics applications.  Its hardness has mainly been used to justify heuristics and approximations, and 
we are not aware of \textsc{UQC} being used to show the hardness of other problems (except for some that obviously contain \textsc{UQC} as a special case~\cite{bryant2001constructing,reaz2014accurate}).
On the other hand, our approach shows that \textsc{UQC} may be well-suited for problems that require finding a branch decomposition, since \textsc{UQC} is a tree reconstruction problem.  It is possible that \textsc{UQC} or similar variants may be useful for other open XP membership problems that involve graph-to-tree representations (for instance module-width~\cite{rao2008clique,belmonte2013graph} which is within a factor two of clique-width, or other similar problems stated in~\cite{hogemo2025mapping}).  

Although our results do not imply hardness for mim-width and variants between 3 and 1210, they indicate that there is little hope of recognizing graphs of small width efficiently.  The notable cases of mim-width 1 and linear mim-width 1 remain open.  These two graph classes have exploitable structural properties, for instance graphs of mim-width 1 are perfect~\cite{vatshelle2012new} and graphs of linear mim-width 1 have no so-called asteroidal edge triples~\cite{hogemo2025mapping}, and their membership in P is still possible. Recognizing \emph{leaf powers}, one of the most natural subclasses of graphs of mim-width 1, was recently shown to be NP-hard~\cite{LeafPowersSODA}.  On the other hand, recognizing \emph{linear} leaf powers, a subclass of graphs of linear mim-width 1, is in P~\cite{LinearLPinP}.

\section{Preliminaries}

\subsection{Notations and Definitions}

All graphs considered are simple and undirected. The vertex set and edge set of a graph $G$ are denoted by $V(G)$ and $E(G)$, respectively. An edge between two vertices $u$ and $v$ is denoted by $uv$ (or equivalently $vu$). For a subset $X \subseteq V(G)$, let $G[X]$ denote the subgraph of $G$ induced by $X$, and let $E_G[X]$ denote the edge set of $G[X]$. For a set of edges $F \subseteq E(G)$, we write $G - F$ for the graph obtained from $G$ by deleting all edges in $F$.

A \emph{matching} of $G$ is a set of edges no two of which share a common endpoint. An \emph{induced matching} of $G$ is a matching $M$ such that every edge of $G$ is incident with at most one edge of $M$. A \emph{cut} of $G$ is a bipartition $(A,B)$ of $V(G)$. We denote by $G[A,B]$ the bipartite subgraph of $G$ with edge set $\{\, uv \in E(G) \mid u \in A,\, v \in B \,\}$.

Unless stated otherwise, all trees in this work are unrooted.  A tree is \emph{ternary} if each vertex has either $1$ or $3$ neighbors\footnote{Note, ternary trees are sometimes called binary, but here we prefer ``ternary'' to emphasize that internal vertices have 3 neighbors, not 2.}.
A \emph{caterpillar} is a tree in which the set of internal vertices induces a path, called its \emph{spine}.
Given a total order $\leq$ on a set $P$ of $n \ge 4$ points, we define the \emph{ternary caterpillar realizing} $\leq$,  with leaf set $P$ as follows. If $p_1 < p_2 < \dots < p_{n-1} < p_n$, then the interval $[2, n-1]$ forms the spine of the caterpillar. Attach each leaf $p_i$ to the integer $i$, and additionally attach $p_1$ to $2$ and $p_n$ to $n-1$. Note that a given ternary caterpillar with leaf set $P$ realizes several total orders, since we may choose which end of the spine is the minimum/maximum and, at each end, which of the two incident leaves plays the “endpoint” role.

\paragraph{Width parameters:}
Given a cut $(A,B)$:
The \emph{mim-value} of $(A,B)$, denoted $\mim_G(A,B)$, is the maximum size of an induced matching in $G[A,B]$.

The \emph{sim-value} of $(A,B)$, denoted $\simv_G(A,B)$, is the maximum size of a matching in $G[A,B]$ that is also an induced matching of $G$ (hence, edges within $A$ and within $B$ are also considered).

For a subset $X \subseteq V(G)$, the \emph{upper-induced matching number} of $X$ is the maximum size of a matching in $G[X, V(G) \setminus X]$ that is an induced matching of the graph $G - E_G[V(G) \setminus X]$.

The \emph{omim-value} of a cut $(A,B)$, denoted $\omim_G(A,B)$, is the minimum of the upper-induced matching numbers of $A$ and $B$.

The \emph{Omim-value} of a cut $(A,B)$, denoted $\Omim_G(A,B)$, is the maximum of the upper-induced matching numbers of $A$ and $B$.

Note that it directly follows from the definitions of the parameters that $\mim_G(A,B) \geq \Omim_G(A,B) \geq \omim_G(A,B) \geq \simv_G(A,B)$.

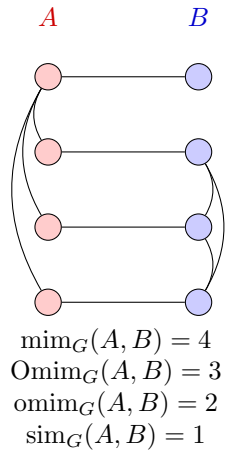
\begin{figure}[H]
    \centering
    \begin{tabular}{c}
        \begin{tikzpicture}[scale=1, every node/.style={circle, draw, minimum size=0.35cm}]
            \node[fill=red!20] (A1) at (-1,3) {};
            \node[fill=red!20] (A2) at (-1,2) {};
            \node[fill=red!20] (A3) at (-1,1) {};
            \node[fill=red!20] (A4) at (-1,0) {};
            \node[fill=blue!20] (B1) at (1,3) {};
            \node[fill=blue!20] (B2) at (1,2) {};
            \node[fill=blue!20] (B3) at (1,1) {};
            \node[fill=blue!20] (B4) at (1,0) {};
            \node[draw=none, fill=none, text=red!80!black, font=\bfseries] at (-1,3.8) {$A$};
            \node[draw=none, fill=none, text=blue!80!black, font=\bfseries] at (1,3.8) {$B$};
            \draw (B1) -- (A1);
            \draw (B2) -- (A2);
            \draw (B3) -- (A3);
            \draw (B4) -- (A4);
            \draw (A1) to [bend right = 30] (A2);
            \draw (A1) to [bend right = 30] (A3);
            \draw (A1) to [bend right = 30] (A4);
            \draw (B2) to [bend left = 30] (B3);
            \draw (B3) to [bend left = 30] (B4);
            \draw (B4) to [bend right = 30] (B2);
        \end{tikzpicture} \\
        $\mathrm{\mim}_G(A,B) = 4$ \\
        $\mathrm{\Omim}_G(A,B) = 3$ \\
        $\mathrm{\omim}_G(A,B) = 2$ \\
        $\mathrm{\simv}_G(A,B) = 1$
    \end{tabular}

\caption{A graph illustrating the differences of the width parameters on a specific cut $(A, B)$}
\end{figure}
A \emph{branch decomposition} of a graph $G$ is a ternary tree $T$ whose leaves correspond exactly to the vertices of $G$.
Given such a tree $T$, each edge $e$ of $T$ defines a cut $(A_e, B_e)$, where $A_e$ and $B_e$ are the vertex sets corresponding to the two connected components of $T - \{e\}$. The \emph{mim-width} (resp. \emph{sim-width}, \emph{omim-width}, \emph{Omim-width}) of $T$ is the maximum, over all edges $e$ of $T$, of $\mim_G(A_e, B_e)$ (resp. $\simv_G(A_e, B_e)$, $\omim_G(A_e, B_e)$, $\Omim_G(A_e, B_e)$). Finally, the \emph{mim-width} (resp. \emph{sim-width}, \emph{omim-width}, \emph{Omim-width}) of $G$ is the minimum, over all branch decompositions $T$ of $G$, of the mim-width (resp. sim-width, omim-width, Omim-width) of $T$. 

Each of these width parameters has a \emph{linear} variant, which require $T$ to be a \emph{caterpillar}.  We denote such a variant by prefixing the parameter name with ``linear'', e.g., linear-mim-width$(G)$ is the minimum mim-width of all branch decompositions of $G$ that are caterpillars.   

\begin{lemma}\label{lem:mimbounds}
    For any graph $G$, we have 
$$
\mimw(G) \geq \Omimw(G) \geq \omimw(G) \geq \simw(G).
$$
This chain of inequalities also holds by replacing each parameter by their linear variant.  Moreover, $\lmw(G) \geq \mimw(G)$, and similarly for each variant of the width parameter.

\end{lemma}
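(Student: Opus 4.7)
The plan is to lift the pointwise inequalities $\mim_G(A,B) \geq \Omim_G(A,B) \geq \omim_G(A,B) \geq \simv_G(A,B)$ (already noted in the preliminaries) from the level of a single cut to the level of whole branch decompositions, and then to the graph parameters themselves.

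First I would fix an arbitrary branch decomposition $T$ of $G$ and observe that, by the pointwise chain, every edge $e$ of $T$ satisfies $\mim_G(A_e,B_e) \geq \Omim_G(A_e,B_e) \geq \omim_G(A_e,B_e) \geq \simv_G(A_e,B_e)$. Taking the maximum over the edges $e$ of $T$ on each side preserves these inequalities, which gives $\mimw(T) \geq \Omimw(T) \geq \omimw(T) \geq \simw(T)$ for any decomposition $T$. This is the key observation: all four widths are maxima of the same quantities over the same finite index set (the edges of $T$), so a pointwise ordering of the integrands transfers to the maxima.

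Next I would pass to the graph parameters by minimizing over $T$. For the first inequality, let $T^\star$ achieve $\mimw(G) = \mimw(T^\star)$. Then $\mimw(G) = \mimw(T^\star) \geq \Omimw(T^\star) \geq \Omimw(G)$, where the last step uses that $\Omimw(G)$ is the minimum of $\Omimw(T)$ over all ternary trees $T$. The same style of argument, applied with $T^\star$ chosen optimal for the ``larger'' parameter at each step, yields the full chain $\mimw(G) \geq \Omimw(G) \geq \omimw(G) \geq \simw(G)$. For the linear variants the reasoning is identical, simply restricting the minimization to caterpillars rather than arbitrary ternary trees.

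Finally, for the inequality $\lmw(G) \geq \mimw(G)$ and its variants, I would note that every caterpillar is a ternary tree, so the set of admissible decompositions for the linear variant is a subset of the one for the non-linear variant. Hence the minimum over the smaller family is at least the minimum over the larger one. I do not anticipate a main obstacle here, as the whole statement is essentially a bookkeeping consequence of the pointwise inequality combined with the elementary fact that $\min_T f(T) \geq \min_T g(T)$ whenever $f(T) \geq g(T)$ for all $T$ (possibly over a restricted family).
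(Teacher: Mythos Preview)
Your proposal is correct and matches the paper's (implicit) approach: the paper states the lemma without proof, relying on the already-noted pointwise inequalities $\mim_G(A,B) \geq \Omim_G(A,B) \geq \omim_G(A,B) \geq \simv_G(A,B)$ and treating the passage to maxima over edges and minima over decompositions as routine. Your write-up simply makes this routine step explicit, including the observation that caterpillars form a subfamily of ternary trees for the linear-versus-nonlinear inequality.
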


\subsection{Unrooted Quartet Consistency}

Given a set of points $P$, a \emph{quartet} consists of two unordered pairs of elements of $P$ that do not intersect.  For clarity, such a quartet $q = \{ \{p_i, p_j\}, \{p_k, p_{\ell} \} \}$ is instead denoted $q = [p_i p_j | p_k p_{\ell}]$.  Slightly abusing notation, we may write $p \in q$ when $p$ is one of the four elements involved in $q$.  

In a ternary tree $T$ with leaves labeled $\{p_1, \dots, p_n\}$, consider any four leaves $p_i, p_j, p_k, p_{\ell}$.  
If there exists an edge $e$ of $T$ such that $p_i$ and $p_j$ lie on one side of $T - \{e\}$, and $p_k$ and $p_{\ell}$ lie on the other, we denote this relation by  
$$
p_i p_j \mid_T p_k p_{\ell}.
$$
In this case, we say that $T$ \emph{satisfies} the quartet $q = [p_i p_j \mid p_k p_{\ell}]$.
If $Q$ is a set of quartets, then $T$ satisfies $Q$ if it satisfies each $q \in Q$.

Because $T$ is ternary, given four leaves $p_i, p_j, p_k, p_\ell$, exactly one of the three configurations holds:
$$
p_i p_j \mid_T p_k p_{\ell}, \quad
p_i p_k \mid_T p_j p_{\ell}, \quad \text{or} \quad
p_i p_{\ell} \mid_T p_j p_k.
$$


The NP-hard problem used in our reductions is the \emph{Unrooted Quartet Consistency} (\textsc{UQC}) problem, introduced by Steel~\cite{Steel1992}.

\medskip 

\noindent
\textbf{Input:} A set $Q$ of quartets and a set of points $P = \{p_1, p_2, \ldots, p_n\}$.\\
\textbf{Question:} Does there exist a ternary tree $T$ with leaves labeled by the points in $P$ that satisfies every quartet in $Q$?

\medskip

The hardness of \textsc{UQC} is proved by a reduction from \textsc{Betweenness}, a classical problem that was shown to be NP-complete by Opatrný~\cite{Opatrny1979}.  In that problem, the input is a set $B$ of ordered triples whose elements are from a set of points $U$.  One must decide whether there is a total order $<$ on $U$ such that, for each $(a, b, c) \in B$, one of $a < b < c$ or $c < b < a$ holds.  Steel's reduction proves a slightly stronger result than NP-hardness, which we will use.

\begin{theorem}[\cite{Steel1992}]\label{thm:steel}
There is a polynomial-time reduction from \textsc{Betweenness} to \textsc{UQC}. Moreover, if an instance of \textsc{Betweenness} is a YES-instance, then there exists a \emph{caterpillar} that satisfies the corresponding \textsc{UQC} instance.
\end{theorem}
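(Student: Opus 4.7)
The plan is to describe a polynomial-time reduction from \textsc{Betweenness} to \textsc{UQC} and then verify that the construction yields a caterpillar witness whenever the \textsc{Betweenness} instance is satisfiable, which gives the ``moreover'' clause essentially for free.

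Given an instance $(U, B)$ of \textsc{Betweenness}, I would enlarge the point set to $P = U \cup \{r_1, r_2\}$ using two fresh reference points, intended to sit at the two ends of a caterpillar spine. For each betweenness triple $(a,b,c) \in B$, I would add to $Q$ a small fixed family of quartets involving $a, b, c, r_1, r_2$ whose intended effect is that, in any tree satisfying $Q$, $b$ must lie on the $r_1$--$r_2$ path strictly between $a$ and $c$. A natural first attempt is to add $[r_1 a \mid b r_2]$ together with $[b r_1 \mid c r_2]$, plus the symmetric pair obtained by swapping the roles of $r_1$ and $r_2$, so that neither orientation of the $r_1$--$r_2$ axis is privileged. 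The exact quartet encoding is the step that needs the most care.

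For the forward direction, given a total order $<$ on $U$ witnessing the \textsc{Betweenness} instance, I would simply build the ternary caterpillar realizing the extended order $r_1 < u_{i_1} < \cdots < u_{i_n} < r_2$ and check directly that each added quartet is satisfied, using only the fact that $b$ lies strictly between $a$ and $c$ in the spine order. This construction is itself a caterpillar, so it immediately yields the moreover clause. For the reverse direction, given any ternary tree $T$ satisfying $Q$, I would extract a candidate order on $U$ by contracting the unique $r_1$--$r_2$ path in $T$ and reading off where each element of $U$ attaches to this path; the quartets then enforce that, for every $(a,b,c) \in B$, $b$ attaches strictly between $a$ and $c$, giving a valid betweenness order.

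The main obstacle I anticipate is calibrating the quartet set so that two opposing requirements are simultaneously met: the forward direction must remain easy to check on a caterpillar, while the reverse direction must robustly extract a linear order even when $T$ is an arbitrary ternary tree with heavy non-spine branching. In particular, the encoding must rule out configurations where $T$ ``cheats'' by using off-spine subtrees to satisfy individual quartets in ways that do not correspond to any linear arrangement of $U$. Once such an encoding is in place, verifying the theorem reduces to routine case analysis on the quartet relations induced by $T$, and the moreover clause is an immediate consequence of the explicit caterpillar constructed in the forward direction.
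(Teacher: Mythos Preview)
The paper does not prove this theorem; it is quoted from Steel~\cite{Steel1992}, and the only description of the reduction appears in the proof of Theorem~\ref{thm:uqceth}, where it is noted that Steel's construction adds two global reference points \emph{and four fresh points per betweenness triple}, producing six quartets per triple. Your plan, by contrast, uses only the two global reference points $r_1, r_2$, and this is where it breaks.

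Concretely, your suggested encoding is already inconsistent: the pair $[r_1 a \mid b r_2]$, $[r_1 b \mid c r_2]$ forces $a$ before $b$ before $c$ along the $r_1$--$r_2$ axis, while the ``symmetric pair'' obtained by swapping $r_1$ and $r_2$ forces the reverse order; adding both makes $Q$ unsatisfiable regardless of the \textsc{Betweenness} instance. More fundamentally, no quartet set on $\{a,b,c,r_1,r_2\}$ alone can capture the betweenness constraint, because that constraint is a disjunction (either $a<b<c$ or $c<b<a$), and for every quartet on four of these five points the two admissible orderings induce \emph{different} quartet types, so there is nothing you can require that holds in both good orderings but fails in a bad one. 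This is exactly why Steel introduces per-constraint auxiliary points: they provide the slack needed to realize either orientation of each triple locally while still forcing $b$ between $a$ and $c$. Your outline for the forward direction (build the caterpillar from a witnessing order and attach $r_1, r_2$ at the ends) is the right shape and does give the ``moreover'' clause for free once a correct encoding is in place, but the encoding itself needs the per-constraint gadgetry, not just two anchors.
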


By analyzing the chain of reductions used to prove Theorem~\ref{thm:steel}, we can also obtain a slightly stronger result under ETH.

\begin{theorem}\label{thm:uqceth}
    Assuming the ETH, the \textsc{UQC} problem cannot be solved in time $2^{o(n+m)}$, where $n$ is the number of points of an instance and $m$ is the number of quartets.
\end{theorem}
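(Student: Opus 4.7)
The plan is to trace the size blow-up through the chain of reductions that culminates in Theorem~\ref{thm:steel} and verify that each reduction is \emph{linear} in the size of its input. Under ETH, 3-SAT with $N$ variables and $M$ clauses has no $2^{o(N+M)}$ algorithm; moreover, by the Sparsification Lemma of Impagliazzo--Paturi--Zane, we may further assume $M = O(N)$, so the lower bound can be stated as $2^{o(N)}$ in terms of the number of variables alone. Our goal is to show that both the reduction from 3-SAT to \textsc{Betweenness} (via Opatrn\'y~\cite{Opatrny1979}) and Steel's reduction from \textsc{Betweenness} to \textsc{UQC} produce output instances whose size is linear in the input, so that a hypothetical $2^{o(n+m)}$ algorithm for \textsc{UQC} would propagate back to a $2^{o(N+M)}$ algorithm for 3-SAT.

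First I would inspect Opatrn\'y's reduction to \textsc{Betweenness}, starting from 3-SAT (or, if more convenient, from an intermediate variant such as NAE-3-SAT reachable from 3-SAT by a standard linear-size reduction). The verification I aim for is that each variable contributes a constant number of points to the universe $U$, and each clause contributes a constant number of ordered triples to $B$, giving $|U| + |B| = O(N+M)$. If the original 1979 construction is not transparently linear in this sense (it predates modern fine-grained analyses), I would instead substitute a direct linear reduction from 3-SAT to \textsc{Betweenness} and justify it in a short appendix.

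Next I would analyse Steel's construction from \textsc{Betweenness} to \textsc{UQC}. The standard construction attaches to each element of $U$ a constant number of leaves, together with a few global ``gadget'' leaves used to anchor the caterpillar shape, and it produces a constant number of quartets per element of $U$ plus a constant number of quartets per triple of $B$. This should yield $|P| = O(|U|)$ and $|Q| = O(|U| + |B|)$, so that $n + m = O(|U| + |B|)$ for the produced \textsc{UQC} instance. The caterpillar-satisfiability part of Theorem~\ref{thm:steel} is not needed here, only the size bound and correctness.

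Composing these bounds, a $2^{o(n+m)}$ algorithm for \textsc{UQC} would give a $2^{o(|U|+|B|)}$ algorithm for \textsc{Betweenness}, and thus a $2^{o(N+M)}$ algorithm for 3-SAT, contradicting ETH. The main obstacle I expect is the first step: Steel's reduction is mechanical to check and is clearly linear, but confirming the linearity of the older Opatrn\'y reduction (or, failing that, supplying a modern linear-size replacement) requires some care. Everything else is a routine ``ETH bookkeeping'' argument that follows once both reductions are certified to be size-linear.
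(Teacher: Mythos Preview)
Your approach is correct and essentially the same as the paper's: both trace the chain of reductions and verify linearity to propagate the ETH lower bound. The one refinement is that the paper resolves your stated obstacle by starting not from 3-SAT but from \textsc{3-Set Splitting} (which is the actual source of Opatrn\'y's Lemma~2 reduction and has a known $2^{o(|U|+|S|)}$ ETH lower bound), and then records the exact constants: $|B|=3|S|$, $|U_B|=2|U|+1$ for Opatrn\'y, and $|Q|=6|B|$, $|P|=|U_B|+2+4|B|$ for Steel.
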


\begin{proof}
    The result can be deduced by simply revisiting the chain of reductions resulting in the NP-hardness of \textsc{UQC}.  
    The first reduction is from \textsc{3-Set Splitting} to \textsc{Betweenness}.  In \textsc{3-Set Splitting}, we receive sets $S$ of size 3 over a universe $U$ and must color $U$ with two colors so that no set of $S$ is monochromatic.  This cannot be solved in time $2^{o(|U| + |S|)}$ under the ETH (a proof appears in~\cite[Proposition 5.1]{antony2024switching}).

    Opatrný's reduction from \textsc{3-Set Splitting} to \textsc{Betweenness}~\cite[Lemma 2]{Opatrny1979} produces from $(S, U)$ an instance $(B, U_B)$ with $|B| = 3m$ and $|U_B| = 2n + 1$ (each set in $S$ becomes three betweenness triples, each element of $U$ has two corresponding elements in $U_B$, plus an extra point).  It follows that \textsc{Betweenness} cannot be solved in time $2^{o(|B| + |U_B|)}$.

    Finally, Steel's reduction~\cite{Steel1992} from  \textsc{Betweenness} to \textsc{UQC} transforms $(B, U_B)$ to an instance $(Q, P)$ with $|Q| = 6|B|$ (each betweenness triple becomes 6 quartets) and $|P| = |U_B| + 2 + 4|B|$ (the point set consists of the elements of $U_B$ and two extra points, and each betweenness constraint adds four points).  Since $|P| + |Q|$ is linear in $|B| + |U_B|$, \textsc{UQC} cannot be solved in time $2^{o(|P| + |Q|)} = 2^{o(n + m)}$.
\end{proof}

\section{Sim-width, omim-width and Omim-width}

We start with a single reduction  showing that it is NP-hard to decide whether a given graph has sim-width, omim-width, or Omim-width equal to $1$.  The same holds for their linear variants.
Given a \textsc{UQC} instance $(Q, P)$, we define the following graph $G$:

\begin{itemize}
    \item \textbf{Vertices:} For each element $p \in P$, we introduce a vertex $p$ in $G$. 
    For each quartet $q = [p_i p_j \mid p_k p_{\ell}] \in Q$, we introduce four additional vertices in $G$:
    \[
        \{u_i^q, u_j^q, u_k^q, u_{\ell}^q\}.
    \]
    Let $U$ denote the set of all such $u$-vertices, so that $|U| = 4|Q|$.

    \item \textbf{Edges:} For each quartet $q = [p_i p_j \mid p_k p_{\ell}] \in Q$, add the edges
    \[
        p_iu_i^q, u_i^q u_j^q, u_j^q p_j, \text{ and }
        p_k u_k^q, u_k^q u_{\ell}^q, u_{\ell}^q p_{\ell}.
    \]
    In addition, connect every pair of vertices $u, u' \in U$ whenever they correspond to different quartets.   That is, for $u_i^q, u_j^r \in U$, we add the edge $u_i^q u_j^r$ whenever $q \neq r$.
    
    Observe that $P$ forms an independent set, and that each $u_i^q \in U$ has only two non-neighbors in $U$.  
\end{itemize}

\begin{figure}[h]
\centering
\begin{tikzpicture}[
    scale=1,
    every node/.style={circle, draw, minimum size=0.8cm},
    u/.style={circle, draw, fill=gray!20, minimum size=0.8cm},
    box/.style={rectangle, draw, rounded corners, minimum width=2.6cm, minimum height=1.2cm, align=center}
]
\node (pi)  at (0,2) {$p_i$};
\node (pj)  at (2,2) {$p_j$};
\node (pk)  at (4,2) {$p_k$};
\node (pl)  at (6,2) {$p_{\ell}$};

\node[u] (ui) at (0,0) {$u_i^q$};
\node[u] (uj) at (2,0) {$u_j^q$};
\node[u] (uk) at (4,0) {$u_k^q$};
\node[u] (ul) at (6,0) {$u_{\ell}^q$};

\draw (pi) -- (ui);
\draw (ui) -- (uj);
\draw (uj) -- (pj);

\draw (pk) -- (uk);
\draw (uk) -- (ul);
\draw (ul) -- (pl);

\end{tikzpicture}
\caption{Subgraph of $G$ corresponding to a quartet $q = [p_i p_j \mid p_k p_{\ell}]$.}
\end{figure}
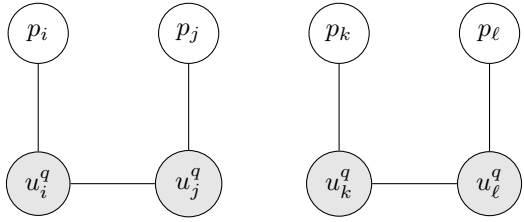

Note that $|V(G)| = |P| + 4|Q|$. The fact that $|V(G)|$ is linear in $|P|$ and $|Q|$ will be used to obtain the ETH-based hardness part of our result. Note that we are not able to obtain the stronger lower bound under ETH of $2^{o(|E(G)|)}$, because the number of edges in this reduction is quadratic.

The combination of Lemma~\ref{lem:mimbounds}, Theorem~\ref{thm:uqceth}, and the following proposition will imply our first hardness result, Theorem~\ref{thm:sim}.

\begin{proposition}\label{prop:swreduction}
    If there is a caterpillar satisfying the \textsc{UQC} instance $(Q,P)$, then $\lOmw(G) = 1$. Otherwise, if $(Q, P) \notin \uqc$, then any branch decomposition $T$ of $G$ satisfies $\simw_G(T) = 2$, and in particular $\simw(G) = 2$.
\end{proposition}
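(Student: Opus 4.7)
Plan. The proposition has two directions. For the forward direction, take the caterpillar satisfying $(Q,P)$ and read off a total order $p_1 < p_2 < \cdots < p_n$ that it realizes. I then construct a caterpillar branch decomposition of $G$ with leaf order $U_1, p_1, U_2, p_2, \ldots, U_n, p_n$, where $U_a$ denotes the set of $u$-vertices $u_a^q$ associated with $p_a$ (the internal order of each $U_a$ is arbitrary). For each prefix cut $(A, B)$ I verify that $\Omim_G(A, B) \leq 1$. A size-$2$ upper-induced matching for side $X \in \{A, B\}$ requires four vertices $u_1, u_2 \in X$, $v_1, v_2 \in \bar X$ with matching edges $u_1v_1, u_2v_2$ and non-edges $u_1u_2, u_1v_2, u_2v_1$. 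A case analysis on whether each endpoint is in $P$ or in $U$ reduces the possible obstructions to: (i) two distinct points $p_a, p_b$ separated from one of their $u$-neighbours by the cut, or (ii) a ``cross-pair'' split such as $\{u_i^q, u_k^q\} \mid \{u_j^q, u_\ell^q\}$ of the four $u$-vertices of a single quartet $q = [p_ip_j\mid p_kp_\ell]$. Placing $U_a$ immediately before $p_a$ ensures that in any prefix cut at most one point $p_a$ can be separated from its $u$-vertices, ruling out (i). Because the given caterpillar satisfies $q$ (so $\max(i,j) < \min(k,\ell)$ in our order), the four $u$-vertices of $q$ appear as $u_i^q, u_j^q$ before $u_k^q, u_\ell^q$, ruling out (ii). Hence $\lOmw(G) \leq 1$, and combined with the trivial lower bound this gives equality.

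For the backward direction, assume $(Q,P) \notin \uqc$ and let $T$ be any branch decomposition of $G$. To show $\simw_G(T) \geq 2$, delete the $U$-leaves from $T$ and suppress the resulting degree-$2$ vertices to obtain a ternary tree $T_P$ with leaf set $P$. Since $(Q,P)$ is a NO-instance, $T_P$ must violate some $q = [p_ip_j\mid p_kp_\ell] \in Q$, so the induced topology of $T$ on $\{p_i, p_j, p_k, p_\ell\}$ is (say) $p_ip_k\mid p_jp_\ell$, realized by some edge $e$ with $p_i, p_k \in A_e$ and $p_j, p_\ell \in B_e$. I then exhibit a size-$2$ sim-matching across the cut $(A_e, B_e)$ using the edges induced by the eight vertices $\{p_i, p_j, p_k, p_\ell, u_i^q, u_j^q, u_k^q, u_\ell^q\}$. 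Three templates suffice: pure $P$-$U$ pairs $\{p_\alpha u_\alpha^q, p_\beta u_\beta^q\}$ with $\{\alpha,\beta\}$ a cross-pair of $q$; mixed pairs $\{p_\alpha u_\alpha^q, u_\gamma^q u_\delta^q\}$ with $\{\gamma, \delta\}$ a same-pair of $q$ and $\alpha \notin \{\gamma, \delta\}$; and the pair $\{u_i^q u_j^q, u_k^q u_\ell^q\}$. Each template is readily checked to be an induced matching of $G$, and enumerating the $16$ possible placements $(X_i, X_j, X_k, X_\ell) \in \{A_e, B_e\}^4$ of the four $u$-vertices shows that in every case at least one template crosses the cut.

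To complete the proof, I establish the cut-independent upper bound $\simv_G(A, B) \leq 2$. An induced matching of size $3$ in $G$ would consist of edges of type $P$-$U$ or $U$-$U$. A case analysis on the number of $U$-$U$ edges shows no such matching exists: three $P$-$U$ edges would force three $u$-endpoints pairwise cross-pair in a common quartet, impossible since each side of a quartet contains only two $u$-vertices; any configuration with at least one $U$-$U$ edge, combined with the non-edge requirements, either forces all $u$-endpoints into a single quartet where a same-pair edge reappears among the endpoints, or into different quartets where a different-quartet pair unavoidably creates an unwanted edge. Combining the bounds yields $\simw_G(T) = 2$ for every branch decomposition $T$, so $\simw(G) = 2$. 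The main technical obstacle is the $16$-placement case analysis in the lower bound step, but it is quite tractable: the four pure $P$-$U$ templates already cover nine placements, and the remaining seven fall to the mixed or pure $U$-$U$ templates.
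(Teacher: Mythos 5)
Your backward direction and your global bound $\simv_G(A,B)\le 2$ are both correct (if more laborious than necessary: the paper shows $\simw(G)\le 2$ by simply noting that among any three vertices of $U$ at least one pair is adjacent, and gets the size-$2$ sim-matching directly from the two paths $p_i\text{--}u_i^q\text{--}u_j^q\text{--}p_j$ and $p_k\text{--}u_k^q\text{--}u_\ell^q\text{--}p_\ell$ crossing the cut with no edges between them, without needing a $16$-case enumeration).

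The forward direction, however, has a genuine gap: your branch decomposition does not in fact certify $\lOmimw(G)=1$. The issue is the blanket placement ``$U_a$ immediately before $p_a$ with arbitrary internal order.'' Your obstruction classification into (i) and (ii) is incomplete: it omits the configuration where exactly one matched endpoint is in $P$ and the remaining three $U$-endpoints do \emph{not} all share a quartet. Concretely, suppose the order realized by the satisfying caterpillar has $p_i,p_j < p_k,p_\ell$ for $q=[p_ip_j\mid p_kp_\ell]$, and take the spine cut placed inside $U_i$ so that $u_i^q\in A$ (prefix) but $p_i\in B$ (suffix). Since $p_k > p_i$ in your order, $u_k^q\in B$, and as soon as some $u_\alpha^{q'}$ with $\alpha<i$ and $q'\ne q$ exists it lies in $A$. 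Then the matching $\{p_i u_i^q,\; u_k^q u_\alpha^{q'}\}$ is a matching in $G[A,B]$ and is an induced matching of $G-E_G[A]$: the crossing non-matching pairs $p_iu_\alpha^{q'}$ and $u_i^qu_k^q$ are non-edges, $p_iu_k^q$ is a non-edge in $B$, and the only offending edge $u_i^qu_\alpha^{q'}$ lies wholly inside $A$ and is removed. Hence the upper-induced matching number of $B$ is at least $2$, so $\Omim_G(A,B)\ge 2$ and the caterpillar you built is not an $\Omim$-width-$1$ decomposition. (A concrete instance: two disjoint quartets $[12\mid 34]$ and $[56\mid 78]$ satisfied by $1<\dots<8$; take $p_i=5$, $u_i^q=u_5$, $u_k^q=u_7$, $u_\alpha^{q'}=u_1$.)

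The paper avoids this by \emph{not} fixing the side of $p_a$ on which the $u_a^q$ vertices sit. Instead, for each quartet $q=[p_ip_j\mid p_kp_\ell]$ it places $u_i^q$ on the side of $p_i$ that faces $u_k^q$ and $u_\ell^q$ in the order: if $p_i,p_j<p_k,p_\ell$ then $p_i<' u_i^q$, and symmetrically otherwise, so that in fact $p_i <' u_i^q <' u_k^q,u_\ell^q$ or $u_k^q,u_\ell^q <' u_i^q <' p_i$. This sandwiching is exactly what kills the configuration above: whenever a cut separates $p_i$ from $u_i^q$, the vertices $u_k^q,u_\ell^q$ are forced onto the same side as $u_i^q$, so the candidate $a_2\in\{u_k^q,u_\ell^q\}$ produced by the ``diagonal non-edge'' argument cannot land on $p_i$'s side. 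You would need to adopt this quartet-dependent placement of $u_i^q$ relative to $p_i$ (rather than an arbitrary order inside $U_a$) for the forward direction to go through.
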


We split the proof in two lemmas.

\begin{lemma}\label{lem:sw2}
    If $(Q, P) \notin \uqc$, then any branch decomposition $T$ of $G$ satisfies $\simw_G(T) = 2$, and in particular $\simw(G) = 2$.
\end{lemma}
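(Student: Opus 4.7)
The plan is to establish, for every branch decomposition $T$ of $G$, both the upper bound $\simw_G(T) \leq 2$ and the lower bound $\simw_G(T) \geq 2$. The upper bound I would get purely structurally by showing that the maximum induced matching of $G$ itself has size at most $2$. The rough reason is that any two $u$-vertices from different quartets are adjacent, so an induced matching can use $u$-vertices from at most one quartet $q$; the four $u$-vertices of $q$ induce exactly $2K_2$, already saturating a size-$2$ induced matching, and one checks that adding any $p$-$u$ edge (from $q$ or a different quartet) would introduce an edge between the chosen endpoints. Since $\simv_G(A,B)$ is bounded by the maximum induced matching of $G$, this immediately yields the upper bound on every $\simw_G(T)$.

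For the lower bound I plan a contrapositive argument: assume $\simw_G(T) \leq 1$ and deduce $(Q,P) \in \uqc$, contradicting the hypothesis. Fix a quartet $q = [p_i p_j \mid p_k p_\ell] \in Q$. Because $T$ is ternary, the four leaves $p_i, p_j, p_k, p_\ell$ satisfy exactly one of the three possible quartet relations in $T$. I claim it must be $[p_i p_j \mid p_k p_\ell]$ itself. Suppose instead, without loss of generality, that $p_i p_k \mid_T p_j p_\ell$, so some edge of $T$ yields a cut $(A, B)$ with $p_i, p_k \in A$ and $p_j, p_\ell \in B$. In $G$, quartet $q$ contributes two internally disjoint paths, $P_1 = p_i u_i^q u_j^q p_j$ and $P_2 = p_k u_k^q u_\ell^q p_\ell$. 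Each has endpoints on opposite sides of the cut, so each contains at least one crossing edge, say $e_1 \subseteq V(P_1)$ and $e_2 \subseteq V(P_2)$.

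The key verification is that no edge of $G$ runs between $V(P_1)$ and $V(P_2)$: the $p$-vertices form an independent set; each of $p_i, p_j, p_k, p_\ell$ has its only $q$-neighbor on its own path (so $p_i, p_j$ are non-adjacent to $u_k^q, u_\ell^q$ and symmetrically); and the two $u$-vertices of $V(P_1)$ are non-adjacent to the two $u$-vertices of $V(P_2)$ because they lie in the same quartet $q$ (cross-quartet $u$-pairs would be adjacent, which is precisely why restricting attention to a single quartet is essential). Hence $\{e_1, e_2\}$ is an induced matching of $G$ crossing $(A, B)$, giving $\simv_G(A,B) \geq 2$ and contradicting $\simw_G(T) \leq 1$. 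Thus $T$ satisfies every $q \in Q$.

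To close, I would take the restriction $T|_P$: the minimal subtree of $T$ spanning the leaves in $P$ with internal degree-$2$ vertices suppressed. Since $T$ is ternary, $T|_P$ is a ternary tree on $P$ that inherits each of the established quartet relations, and therefore satisfies $Q$, contradicting $(Q,P) \notin \uqc$. I expect the main obstacle to be the bookkeeping of non-edges between $V(P_1)$ and $V(P_2)$ in $G$; this step relies entirely on the design of $G$, where within each quartet the $u$-vertices split into two non-adjacent pairs and each $p$-vertex has a single $u$-neighbor within that quartet, and it is what ensures that any ``wrong'' quartet split in $T$ is detected by some cut of sim-value at least $2$.
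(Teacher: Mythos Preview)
Your proof is correct and follows essentially the same approach as the paper: the upper bound comes from the global fact that $G$ has no induced matching of size~$3$ (the paper phrases this via ``any three $u$-vertices contain an adjacent pair'', you via ``all $u$-endpoints lie in one quartet, whose gadget is two disjoint $P_4$'s''), and the lower bound comes from the two crossing edges on the disjoint paths $p_i u_i^q u_j^q p_j$ and $p_k u_k^q u_\ell^q p_\ell$ whenever $T$ splits $q$ wrongly. Your explicit restriction to $T|_P$ at the end is a detail the paper leaves implicit when it says ``otherwise $(Q,P)$ would be in \textsc{UQC}''; otherwise the arguments coincide.
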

\begin{proof}
    We start by proving that $\simw(G) \leq 2$. 
    
    This follows directly from the fact that the maximum induced matching in the full graph $G$ has size at most 2. Suppose, for contradiction, that there exists an induced matching of size at least $3$. Since the set $P$ is an independent set in $G$, at least one endpoint of every edge in the matching must belong to $U$.
    By construction, each vertex in $U$ is adjacent to all other vertices of $U$ except two, which themselves are adjacent. Hence, among any three such vertices of $U$, there must exist at least one edge between them, contradicting the assumption that they form an induced matching.

    We now prove that if $(Q, P) \notin \uqc$, then $\simw(G) \ge 2$.

    Let $(Q, P) \notin \uqc$, and let $T$ be a branch decomposition of $G$. There must exist a quartet $q = [p_i p_j \mid p_k p_{\ell}] \in Q$ not satisfied by $T$, otherwise $(Q, P)$ would be in \textsc{UQC}. Assume without loss of generality that $p_i p_k \mid_T p_j p_{\ell}$, and let $e$ be an edge of $T$ inducing the cut $(A_e, B_e)$ such that $p_i, p_k \in A_e$ and $p_j, p_{\ell} \in B_e$.
    At least one of the edges along the path $p_i- u_i^q -u_j^q- p_j$ must cross this cut, and similarly, one of the edges along the path $p_k- u_k^q- u_{\ell}^q- p_{\ell}$ must also cross the cut.  Because there is no edge with one endpoint in the first path and the other in the second path, 
    these two edges form an induced matching in $G$, and therefore $\simw_G(T) \ge 2$ for every branch decomposition $T$. Consequently, $\simw(G) \ge 2$.
\end{proof}

\begin{lemma}\label{lem:lomw1}
    If there is a caterpillar satisfying the \textsc{UQC} instance $(Q,P)$, then $\lOmw(G) = 1$.
\end{lemma}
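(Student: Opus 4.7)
The plan is to construct an explicit caterpillar $T$ on $V(G)$ derived from the caterpillar $T_0$ satisfying $(Q,P)$, and verify that every cut of $T$ has $\Omim\le 1$. First, I will extract from $T_0$ a total order $p_1<p_2<\dots<p_n$ on $P$; since $T_0$ satisfies every quartet $q=[p_ip_j\mid p_kp_\ell]\in Q$, one pair appears entirely to the left of the other in this order, and after relabelling I may assume $\mathrm{pos}(p_i)<\mathrm{pos}(p_j)<\mathrm{pos}(p_k)<\mathrm{pos}(p_\ell)$ for every $q$. I then extend to a total order $\pi'$ on $V(G)$ by inserting, for every $q$: the vertex $u_i^q$ immediately after $p_i$, $u_j^q$ immediately after $p_j$, $u_k^q$ immediately before $p_k$, and $u_\ell^q$ immediately before $p_\ell$ (ties broken arbitrarily when several insertions fall into the same gap). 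The caterpillar $T$ will realize $\pi'$.

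Single-leaf cuts of $T$ trivially satisfy $\Omim\le 1$, so I consider an arbitrary prefix/suffix cut $(A,B)$ with $|A|,|B|\ge 2$ and aim to show the upper-induced matching number of $A$ is at most $1$ (the bound for $B$ will follow from a symmetric argument exploiting the dual placement of right-pair $u$-vertices). Suppose for contradiction there are cross edges $a_1b_1,a_2b_2$ with $a_1a_2,a_1b_2,a_2b_1\notin E(G)$. Two structural facts about $G$ drive the analysis: (F1) any two $u$-vertices from distinct quartets are adjacent; and (F2) the only neighbors of $p_i\in P$ are the $u_i^r$ with $p_i\in r$. A short positional check first rules out two distinct $p$-vertices in $A$ having cross edges simultaneously, because if $p_i\in A$ has cross edge $p_iu_i^r$ with $u_i^r\in B$, then the cut position must lie in the tight gap immediately after $p_i$ in $\pi'$, which is incompatible with another $p_{i'}$ being in the same situation.

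By (F1), if $a_1,a_2\in U$ come from distinct quartets then $a_1a_2\in E$, a contradiction. The remaining cases are (1) both $a_i\in U^q$ for a single quartet $q$, and (2) WLOG $a_1=p_i\in P$ and $a_2\in U$. In Case 2, (F2) gives $b_1=u_i^r$ for some $r\ni p_i$; WLOG $p_i$ lies in the left pair of $r$, so $u_i^r$ is placed just after $p_i$ in $\pi'$. If $a_2=u_?^{r'}$ with $r'\ne r$, then $a_2b_1\in E$ by (F1). If $a_2=u_j^r$ where $\{i,j\}$ is the left pair of $r$, then $a_2b_1$ is a same-pair edge in $G$. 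The subcase $a_2\in\{u_k^r,u_\ell^r\}$ is impossible by a position argument: both right-pair $u$-vertices of $r$ lie strictly after $u_i^r$ in $\pi'$, so $a_2\in A$ forces $u_i^r\in A$, contradicting $b_1\in B$. Case 1 is handled in the same flavor: same-pair $a_i$ immediately yield $a_1a_2\in E$; different-pair $a_i$ (say $a_1=u_i^q$, $a_2=u_k^q$) force $b_1$ to be a different-quartet $u$-vertex, since the other candidates for $b_1$ (namely $p_i$ or $u_j^q$) are placed strictly before $\mathrm{pos}(u_k^q)\le c$ and thus lie in $A$, not $B$. Then (F1) gives $a_2b_1\in E$, a contradiction.

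The main obstacle will be ruling out the ``mixed'' configuration in Case 2, in which one cross edge is a $P$–$U$ edge within a single quartet and the other is an inter-quartet $u$–$u$ edge. Naive orderings (for instance placing all $u$-companions immediately after their $p$-partners) \emph{do} produce a size-$2$ upper-induced matching in exactly this configuration; the symmetric placement rule---left-pair $u$'s after, right-pair $u$'s before their $p$-companions---is designed precisely to enforce the inequality $\mathrm{pos}(u_i^r)<\mathrm{pos}(u_k^r)$, which is the invariant that closes the argument.
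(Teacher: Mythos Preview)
Your argument is correct and essentially mirrors the paper's: the caterpillar you build is exactly the order $\leq'$ constructed there (each $u_i^q$ placed on the side of $p_i$ facing the opposite pair of $q$), and the case analysis is the same, the only organizational difference being that the paper verifies the single criterion ``either a diagonal $a_1b_2/a_2b_1$ exists, or both $a_1a_2$ and $b_1b_2$ are edges'' to bound the upper-induced matching of $A$ and $B$ simultaneously, whereas you do the $A$-side and then invoke order-reversal symmetry. One caution on your wording: your tie-breaking must keep all ``immediately after $p_j$'' insertions strictly before all ``immediately before $p_{j+1}$'' insertions, since your Case-1 inequality $\mathrm{pos}(u_j^q)<\mathrm{pos}(u_k^q)$---and hence the whole argument---fails if those two micro-blocks are interleaved.
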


\begin{proof}

    Let $T$ be such a caterpillar, and let $\leq$ be a total order on $P$ that $T$ realizes. For each $p_i \in P$, define $$C_i = \{p_i\} \cup \{u_i^q \mid q \in Q,\; p_i \in q\},$$ that is, $C_i$ consists of all vertices in $P \cup U$ whose index is $i$. We extend $\leq$ to a total order $\leq'$ on $P \cup U$ as follows. 

    For all $p_i, p_j \in P$ with $p_i < p_j$, and for all $x \in C_i$, $y \in C_j$, we set $x <' y$. Within each $C_i$, the relative position of $p_i$ and any $u_i^q$ depends on the quartet $q$:

    If $q = [p_i p_j \mid p_k p_\ell]$, then, since $T$ satisfies $q$, we have either $p_i, p_j \leq p_k, p_\ell$ or $p_k, p_\ell \leq p_i, p_j$. In the former case ($p_i, p_j \leq p_k, p_\ell$) we set $p_i <' u_i^q$, in the latter case ($p_k, p_\ell \leq p_i, p_j$) we set $u_i^q <' p_i$. Note that this implies that we have either $$p_i <' u_i^q <' u_k^q, u_\ell^q, \text{ or } u_k^q, u_\ell^q <' u_i^q <' p_i.$$

    We choose an arbitrary order within $C_i$ among the $u$-vertices to the left of $p_i$ and among those to the right of $p_i$. We then consider the ternary caterpillar $T'$ realizing $\leq'$. We now prove that $\Omimw_G(T') = 1$, that is, $\Omim_G(A_e, B_e) = 1$ for every edge $e$ of $T'$.

    Because $T'$ is a caterpillar, it suffices to consider the cuts induced by the edges of its spine. Let $e$ be an edge of the spine, and let $(A_e, B_e)$ be the cut induced by $e$. Since $T'$ realizes the order $\leq'$, one side of the cut corresponds to an initial segment of $\leq'$. Thus, we have either for all $a \in A_e$ and $b \in B_e$, $a <' b$, or for all $a \in A_e$ and $b \in B_e$, $b <' a$.

    To prove that $\Omim_G(A_e,B_e) = 1$, we need to prove that the upper induced matching numbers of $A_e$ and $B_e$ are both equal to one. This means that there is no matching of size $2$ in $G[A_e,B_e]$ that is an induced matching in the graph $G - E_G[A_e]$ or in the graph $G- E_G[B_e]$. Let $a_1b_1, a_2b_2 \in E(G)$ be two edges with disjoint endpoints such that $a_1, a_2 \in A_e$ and $b_1, b_2 \in B_e$.  To prove that it is not an induced matching in $G - E_G[A_e]$ or $G - E_G[B_e]$, it suffices to show that one of the two following conditions holds:

    \begin{enumerate}
        \item One of the diagonals $a_1b_2$ or $a_2b_1$ is an edge, or\label{condition:1}
        \item Both $a_1a_2$ and $b_1b_2$ are edges. \label{condition:2}
    \end{enumerate}
    
    Each vertex $p_i \in P$ is connected only to vertices of the form $u_i^q$. Consider two edges $p_i u_i^q$ and $p_j u_j^{q'}$ with $i \neq j$. By the construction of the order $\leq'$, we have either $p_i, u_i^q <' p_j, u_j^{q'}$ or $p_j, u_j^{q'} <' p_i, u_i^q$. Therefore, it is impossible for the cut $(A_e, B_e)$ to contain both edges $p_i u_i^q$ and $p_j u_j^{q'}$, as the cut can separate at most one $p_i$ vertex from one of its neighbors in $G$. Consequently, at most one vertex among $\{a_1, a_2, b_1, b_2\}$ can belong to $P$.

    Suppose that exactly one vertex among $\{a_1, a_2, b_1, b_2\}$ belongs to $P$. Without loss of generality, assume that $a_1 = p_i$. Then $b_1$ must be of the form $u_i^q$. Suppose by contradiction that condition~\ref{condition:1} is false and that $a_2b_1$ is not an edge. Noting that $b_1 \in U$, the only vertices in $U$ that are not adjacent to $u_i^q$ are those corresponding to the other side of the same quartet. That is, if $q = [p_i p_j \mid p_k p_\ell]$, then these vertices are $u_k^q$ and $u_\ell^q$. By the definition of the order $\leq'$, we have either $p_i <' u_i^q <' u_k^q, u_\ell^q$, or $u_k^q, u_\ell^q <' u_i^q <' p_i$. Since the edge $e$ lies on the path between $a_1 = p_i$ and $b_1 = u_i^q$ in $T'$, the vertices $u_k^q$ and $u_\ell^q$ cannot belong to $A_e$. This yields a contradiction.

    Now suppose that all vertices $\{a_1, a_2, b_1, b_2\}$ belong to $U$. Let $q$ be the quartet associated with $a_1$ and $q'$ the quartet associated with $b_1$. As before, assume for contradiction that condition~\ref{condition:1} is false, and that neither $a_1b_2$ nor $a_2b_1$ is an edge. The only vertices in $U$ not adjacent to $a_1$ correspond to the other side of the same quartet, so $q$ must also be the quartet of $b_2$. Similarly, $q'$ must be the quartet of $a_2$.

    If $q = q'$, then we have $\{a_1, a_2, b_1, b_2\} = \{u_i^q, u_j^q, u_k^q, u_\ell^q\}$ for some $q = [p_i p_j \mid p_k p_\ell] \in Q$. The only edges among these vertices are $u_i^q u_j^q$ and $u_k^q u_\ell^q$, which are both separated by $e$. Consequently, the quartet $[u_i^q u_j^q \mid u_k^q u_\ell^q]$ is not satisfied by $T'$, which, by the definition of $\leq'$, implies that $q$ is not satisfied by $T$, and we have a contradiction.\\
    On the other hand, if $q \neq q'$, then both $a_1a_2$ and $b_1b_2$ are edges, and hence condition~\ref{condition:2} holds.
\end{proof}

Proposition~\ref{prop:swreduction} follows from Lemma~\ref{lem:sw2} and Lemma~\ref{lem:lomw1}.
We are now ready to prove Theorem~\ref{thm:sim}.

\thmsim*

\begin{proof}
    Let $\sigma$ be one of the parameters $\simw, \omimw, \Omimw$, $\lsw, \lomimw$, or $\lOmimw$.  We want to show that deciding whether $\sigma(G) = 1$ is NP-complete. Membership in NP is easy since a branch decomposition is an easily verifiable certificate.  
    
    For NP-hardness, for technical accuracy we need to use Theorem~\ref{thm:steel} and reduce \textsc{Betweenness} to deciding $\sigma(G) = 1$. 
    Take an instance $I$ of \textsc{Betweenness} and let $(Q, P)$ be the \textsc{UQC} instance obtained from Theorem~\ref{thm:steel}, and then obtain $G$ as constructed above from that \textsc{UQC} instance.  If $I$ is a YES-instance of \textsc{Betweenness}, then $(Q, P) \in $ \textsc{UQC} and by Theorem~\ref{thm:steel} there is a ternary caterpillar that satisfies $Q$.  Then, Proposition~\ref{prop:swreduction} implies that $\lOmimw(G) = 1$.  This implies that $\sigma(G) = 1$, since $\lOmimw(G)$ is an upper bound on all the possible $\sigma$ parameters listed above (by Lemma~\ref{lem:mimbounds}).

    Conversely, if $I$ is a NO-instance of \textsc{Betweenness}, then no tree satisfies the \textsc{UQC} instance $(Q, P)$.  By Proposition~\ref{prop:swreduction}, $\simw(G) = 2$, which implies $\sigma(G) \geq 2$ since $\simw(G)$ is a lower bound on all the possible $\sigma$ parameters listed above (by Lemma~\ref{lem:mimbounds}).
    Therefore, $I$ is a YES-instance of \textsc{Betweenness} if and only if $\sigma(G) = 1$.

    Moreover $|V(G)| =  |P| + 4|Q|$, therefore an algorithm recognising graphs with $\sigma$ parameter equal to one in time $2^{o(|V(G)|)}$ could be used to solve \textsc{UQC} in time $2^{o(|P|+|Q|)}$, contradicting Theorem~\ref{thm:uqceth} assuming ETH.
\end{proof}

\paragraph{Remarks on the graph class of hard instances.}  Let us observe that, denoting by $C_k$ the chordless cycle on $k$ vertices, our construction produces a graph $G$ in which the $U$ vertices induce the complement of a set of disjoint $C_4$'s, and that all $P$ vertices have only neighbors in $U$ and are simplicial i.e., their neighbors form a clique (since two distinct neighbors of a $p_i$ vertex have the form $u_i^q, u_i^{q'}$ with $q \neq q'$).  This implies that $G$ is $C_k$-free for every $k \geq 5$: the $p_i$'s cannot be in a $C_k$ because they are simplicial, and and can show that the $U$ vertices cannot have a $C_k$ since they are the complement of disjoint $C_4$'s.  We note that for chordal graphs, i.e., graphs that are $C_k$-free for every $k \geq 4$, whether $\lsw(G) = 1$ can be decided in polynomial time~\cite{Ziedan2018}. The set of forbidden induced cycles from our construction is therefore, in some sense, tight for the casse $\lsw$ equal to 1.  

Additionally, the complement $\overline{G}$ is also $C_k$-free for $k \geq 5$.  To see this, suppose that $\overline{G}$ has an induced $C_k$ and notice that it must contain some $p_i$ vertex (as the $U$ vertices induce disjoint $C_4$'s).  Since $P$ is a clique in $\overline{G}$, $p_i$ must have some neighbor $u_j^q$ on $C_k$, with $i \neq j$.  The other neighbor of $u_j^q$ on $C_k$ can only be $u_i^q$ since a $p_j$ vertex or a $u_\ell^q$ vertex with $i \neq \ell$ would form a triangle with $p_i$.  But then, the other neighbor of $u_i^q$ is either some $u_\ell^q$ or some $p_j$ vertex, which are both neighbors of $p_i$, and so no $C_k$ is possible for $k \geq 5$.  
Thus the hard instances do not contain induced cycles of length 5 or more nor their complements.  In particular, they are perfect graphs and the NP-hardness also holds on this class of graphs. 

\paragraph{Using a similar reduction for mim-width 1 (or linear mim-width 1)?}

It is interesting to ponder the possibility of using a similar reduction for $\mimw=1$.  It appears difficult to do so and here we provide the intuitive reasons for this.
The high-level idea in our reduction is to construct a graph $G$ that includes vertices representing the elements of the UQC instance and, for each quartet $q$, a gadget $H_q$.  A reduction for $\mimw = 1$ would have the following intended properties.

\begin{enumerate}
     \item In any branch decomposition of $G$, if the four vertices of a quartet $q$ do not satisfy $q$, then some edge of the decomposition induces a cut within $H_q$ containing an induced matching of size~2. Thus, $H_q$ is meant to ``force'' a particular shape for the subtree containing these four vertices in any branch decomposition of mim-width~1.

    \item If the UQC instance is satisfiable, then there exists a branch decomposition of $G$ whose mim-width is exactly~1.
\end{enumerate}

The difficulty arises when we try to satisfy these conditions simultaneously. Consider two different quartets $q$ and $q'$ with corresponding gadgets $H_q$ and $H_{q'}$. In the branch decomposition guaranteed by property~2, the subtrees corresponding to these two gadgets may overlap (it seems hard to avoid this). As a consequence, an edge of $H_q$ used in the induced matching when $q$ is unsatisfied (property~1) may lie in the same cut as an edge of $H_{q'}$ used when $q'$ is unsatisfied.

To ensure mim-width~1, we must prevent these two edges to form an induced matching. This forces us to add \emph{blocking edges} between certain endpoints of edges coming from $H_q$ and $H_{q'}$ in order to destroy the unwanted induced matchings. In our reduction, the blocking edges correspond to the edges between the $u$-vertices corresponding to different quartets.

However, these blocking edges create a new problem: they may themselves form an induced matching of size~2 across some cut, contradicting property~2 and preventing the existence of a mim-width~1 decomposition even when the UQC instance is satisfiable. In our reduction, this correspond to the last case in the proof of Lemma~\ref{lem:lomw1}: the matching of the type $u_x^q u_y^{q'}$, $u_w^{q'} u_z^q$ for different $q,q'$, which  is only destroyed with edges not in the cut.

In short, enforcing quartet constraints independently via the gadgets $H_q$ inevitably causes interactions (either among the gadgets or among the blocking edges) that create new induced matchings. We found no way around this, which is why our approach does not seem extendable to mim-width~1 (or linear mim-width~1).

\section{Mim-width and linear mim-width}

In this section, we describe two closely related reductions. The first is used to prove the hardness of recognizing graphs with $\mimw$~2, and the second establishes the hardness of recognizing graphs with $\lmw$~2.

We modify the previous reduction by adding four new vertices per quartet. Additionally, one extra vertex is introduced in the non-linear version.

Given a \textsc{UQC} instance $(Q, P)$, we define the following graph $G$:

\begin{itemize}
    \item \textbf{Vertices:} For each element $p \in P$, we introduce a vertex $p$. 
    For each quartet $q = [p_i p_j \mid p_k p_{\ell}] \in Q$, we introduce eight additional vertices:
    \[
        \{u_i^q, u_j^q, u_k^q, u_{\ell}^q, \gamma_i^q, \gamma_j^q, \gamma_k^q, \gamma_{\ell}^q\}
    \]
    Let $U$ (resp.\ $\Gamma$) denote the set of all such $u$-vertices (resp.\ $\gamma$-vertices).

    \item \textbf{Edges:} For each quartet $q = [p_i p_j \mid p_k p_{\ell}] \in Q$, add the edges
    \[
        \{p_i, u_i^q\}, \{u_i^q, u_j^q\}, \{u_j^q, p_j\}, 
        \{p_k, u_k^q\}, \{u_k^q, u_{\ell}^q\}, \{u_{\ell}^q, p_{\ell}\},
    \]
    together with
    \[
        \{p_i, \gamma_i^q\}, \{p_j, \gamma_j^q\}, 
        \{p_k, \gamma_k^q\}, \{p_{\ell}, \gamma_{\ell}^q\}.
    \]
    Moreover, add all possible edges so that $\Gamma$ forms a clique, add an edge between every pair of vertices $u, u' \in U$ whenever they correspond to different quartets (i.e., whenever their superscript differs), and between every pair of vertices $u \in U$ and $\gamma \in \Gamma$, whenever they correspond to different quartets.
\end{itemize}

The graph $H$ is obtained from $G$ by adding an extra vertex~$\omega$, sharing an edge with every vertex in~$U$.

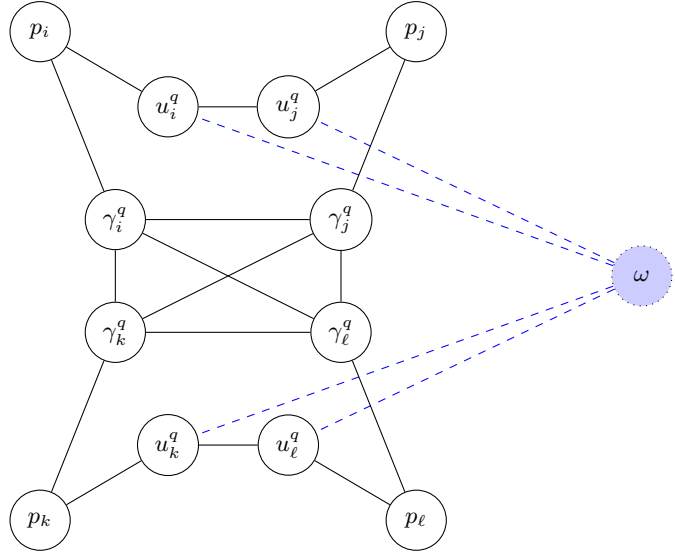
\begin{figure}[h]
\centering
\begin{tikzpicture}[
    scale=1,
    every node/.style={circle, draw, minimum size=0.8cm, font=\small},
    v/.style={},
    u/.style={},
    g/.style={}
    ]

\node[v] (vi) at (-1,4) {$p_i$};
\node[v] (vj) at (4,4) {$p_j$};

\node[u] (ui) at (0.7,3) {$u_i^q$};
\node[u] (uj) at (2.3,3) {$u_j^q$};

\node[g] (gi) at (0,1.5) {$\gamma_i^q$};
\node[g] (gj) at (3,1.5) {$\gamma_j^q$};
\node[g] (gk) at (0,0) {$\gamma_k^q$};
\node[g] (gl) at (3,0) {$\gamma_{\ell}^q$};

\node[u] (uk) at (0.7,-1.5) {$u_k^q$};
\node[u] (ul) at (2.3,-1.5) {$u_{\ell}^q$};

\node[v] (vk) at (-1,-2.5) {$p_k$};
\node[v] (vl) at (4,-2.5) {$p_{\ell}$};

\node[fill=blue!20,dotted] (w) at (7,0.75) {$\omega$};

\draw[blue,dashed] (w) -- (ui);
\draw[blue,dashed] (w) -- (uj);
\draw[blue,dashed] (w) -- (uk);
\draw[blue,dashed] (w) -- (ul);

\draw (vi) -- (ui) -- (uj) -- (vj);

\draw (vk) -- (uk) -- (ul) -- (vl);

\draw (vi) -- (gi);
\draw (vj) -- (gj);
\draw (vk) -- (gk);
\draw (vl) -- (gl);

\draw (gi) -- (gj);
\draw (gj) -- (gl);
\draw (gl) -- (gk);
\draw (gk) -- (gi);
\draw (gi) -- (gl);
\draw (gj) -- (gk);

\end{tikzpicture}
\caption{Gadget corresponding to a single quartet $q = [p_i p_j \mid p_k p_{\ell}]$, with the extra vertex $\omega$ of $H$ represented.}
\end{figure}

Note, the constructed graphs $G$ and $H$ have $\mimw$ at least 2, since Vatshelle argued that graphs of $\mimw$ $1$~\cite[Corollary 3.7.4]{vatshelle2012new} have no induced cycle of length 5 or more, whereas these graphs have such cycles of the form $p_i - u_i^q - u_j^q - p_j - \gamma_j^q - \gamma_i^q - p_i$.

\begin{proposition}\label{prop:mimw-lmimw}
    If there is a caterpillar satisfying the \textsc{UQC} instance $(Q,P)$, then $\lmw(G) = 2$, and $\mimw(H) = 2$. Otherwise, if $(Q,P) \notin \uqc$ then $\lmw(G) \geq 3$, and $\mimw(H) \geq 3$.
\end{proposition}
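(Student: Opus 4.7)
The plan mirrors Proposition~\ref{prop:swreduction}: prove a satisfiability direction and an unsatisfiability direction for both $G$ and $H$. Since the induced cycle $p_i u_i^q u_j^q p_j \gamma_j^q \gamma_i^q p_i$ already gives $\mimw(G), \mimw(H) \geq 2$, only the upper bounds of $2$ in the YES case and the lower bounds of $3$ in the NO case need to be established.

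In the satisfiability direction, starting from a caterpillar $T$ realizing a linear order $\leq$ on $P$ that satisfies $(Q,P)$, I would extend $\leq$ to an order $\leq'$ on $V(G)$ analogously to Lemma~\ref{lem:lomw1}. The clusters $C_i = \{p_i\} \cup \{u_i^q, \gamma_i^q : q \ni p_i\}$ inherit their relative order from $P$, and within each $C_i$ the vertices $u_i^q$ and $\gamma_i^q$ are placed relative to $p_i$ based on whether $p_i$ belongs to the smaller or larger pair of $q$ in $T$. The caterpillar $T_G$ realizing $\leq'$ would serve as the candidate decomposition for $G$, and I would verify $\lmw(T_G) \leq 2$ by a case analysis on each spine cut, using that the sparsity of adjacencies between $q$-internal and $q$-external vertices prevents any induced matching of size $3$. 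For $H$, I would turn $T_G$ into a non-linear branch decomposition by attaching $\omega$ as a pendant branch at a chosen spine position so that the new cuts involving $\omega$ also satisfy $\mim \leq 2$; the non-linearity is essential here, which is why only $\mimw(H) = 2$ and not $\lmw(H) = 2$ is claimed.

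For the unsatisfiability direction, consider any branch decomposition $T$ (a caterpillar for $G$, a general ternary tree for $H$). Some quartet $q = [p_i p_j \mid p_k p_\ell] \in Q$ must fail in $T$, and WLOG there is a cut $(A, B)$ with $p_i, p_k \in A$ and $p_j, p_\ell \in B$. As in Lemma~\ref{lem:sw2}, the paths $p_i u_i^q u_j^q p_j$ and $p_k u_k^q u_\ell^q p_\ell$ each contribute a crossing edge, yielding an induced matching of size $2$. For the third induced crossing edge, my plan is to use either a pendant $p_\alpha \gamma_\alpha^q$ or a diagonal $\gamma$-clique edge $\gamma_\alpha^q \gamma_\beta^q$, exploiting that $\gamma_\alpha^q$ is non-adjacent to every $u$- and $p$-vertex of $q$ other than $p_\alpha$ (and non-adjacent to $\omega$ in $H$). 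For $H$, when the $\gamma$-based options are insufficient, I would use $\omega$ as a fallback: an edge $\omega u^{q'}_\alpha$ combined with two pendants $p_\beta \gamma_\beta^{q''}, p_\gamma \gamma_\gamma^{q'''}$ (which avoid $U$) would produce an induced $3$-matching, leveraging that $\omega$ is non-adjacent to $P \cup \Gamma$.

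The hard part will be the third-edge case analysis in the worst configuration, where all eight auxiliary vertices of the violated quartet lie on the same side of the cut: within $q$'s gadget alone, only an induced matching of size $2$ is available there. In that regime, the argument must either slide the cut along the caterpillar's spine (for $G$, exploiting that as the cut moves between $p_k$ and $p_j$ in the linear order, the positions of the $u$- and $\gamma$-vertices of $q$ relative to $(A,B)$ change, eventually producing a diagonal $\gamma$-clique crossing $\gamma_\alpha^q \gamma_\beta^q$) or invoke $\omega$ (for $H$). Balancing the $\Gamma$-clique constraint, which limits how many $\gamma$-pendants can coexist in an induced matching, against the universality of $\omega$ on $U$, which both enables and blocks certain third-edge choices, is the main technical hurdle.
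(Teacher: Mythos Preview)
Your overall split into four lemmas and the lower bound of $2$ via the induced $C_6$ match the paper. The YES direction for $G$ is essentially right, though the paper shows that an \emph{arbitrary} order inside each $C_i$ already suffices, so your careful placement is not needed. On the YES direction for $H$, note that attaching $\omega$ as a leaf on the spine of a caterpillar still yields a caterpillar, so your construction is linear and your remark that ``non-linearity is essential'' is confused; the paper instead replaces each leaf $p_i$ of $T$ by a rooted subtree on $C_i$ with $p_i$ pendant at the root, a genuinely non-linear tree whose point is to make every cut that separates $p_i$ from a neighbour trivial, thereby eliminating $P$-vertices as matching endpoints.

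The real gap is in the NO direction. For $\lmw(G)\ge 3$ you correctly isolate the hard configuration (all eight auxiliary vertices of the violated quartet on one side) and propose to ``slide'' the cut, but you are missing the structural lemma that makes sliding work: one must first prove that for \emph{every} cut on the path $\pi$ separating $\{p_i,p_k\}$ from $\{p_j,p_\ell\}$, the four vertices $\gamma_i^q,\gamma_j^q,\gamma_k^q,\gamma_\ell^q$ lie on the same side (the paper's Claim~\ref{claim:gammas}, proved by its own case analysis). Only with this in hand do the $\gamma^q$-vertices sit entirely past one end of $\pi$, so that the first cut beyond $\pi$ puts three of $p_i,p_j,p_k,p_\ell$ on one side while all $\gamma^q$ remain on the other, yielding the $3$-matching $p_i\gamma_i^q,\ p_j\gamma_j^q,\ p_k\gamma_k^q$. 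Note the paper's witnessing matching consists of three $p$--$\gamma$ pendants at a cut \emph{outside} $\pi$; your scheme of two $u$-path edges plus one $\gamma$-edge at a cut \emph{inside} $\pi$ cannot close, because once all $\gamma^q$ are on one side no third edge within $q$'s gadget crosses. For $\mimw(H)\ge 3$ your plan (``invoke $\omega$ as a fallback'') badly underestimates the work: on top of Claim~\ref{claim:gammas}, the paper needs a chain of eight facts at the branching vertex $v$ at the end of $\pi$ (all $u^q$ in $B$, then $\omega\in B$, then a constrained distribution of the $u^q$ and $\gamma^q$ between the two branches $B_{top},B_{bot}$), each step eliminating a configuration by exhibiting a $3$-matching across one of the three cuts incident to $v$. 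An ad hoc search for a third edge involving $\omega$ will not suffice.
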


The proof have the same structure as in the previous section, but require a more involved case analysis. We split it in several lemmas.

\begin{lemma}\label{lem:G3}
If $(Q,P) \notin \uqc$ then $\lmw(G) \geq 3$. 
\end{lemma}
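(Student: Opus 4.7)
The plan is to argue by contradiction: suppose $\lmw(G) \leq 2$ and let $T$ be a caterpillar branch decomposition of $G$ achieving this. Then $T$ induces a linear order $\leq$ on $V(G)$, and every spine cut of $T$ is a prefix-suffix split with induced matching of size at most $2$. Restricting $\leq$ to $P$ yields a caterpillar $T_P$ on $P$. Since $(Q,P) \notin \uqc$, no tree---and in particular no caterpillar---satisfies $Q$, so some quartet $q = [p_ip_j \mid p_kp_\ell] \in Q$ fails in $T_P$. After relabeling by symmetry, we may assume $\max(p_i, p_k) < \min(p_j, p_\ell)$ in $\leq$, with $p_i \leq p_k$ and $p_j \leq p_\ell$. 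Our goal is to exhibit a cut of $T$ that contains an induced 3-matching of $G$, contradicting $\lmw_G(T) \leq 2$.

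The strategy is to pick a cut $(A,B)$ whose position lies in the interval $[\max(p_i,p_k), \min(p_j,p_\ell))$, so that $p_i, p_k \in A$ and $p_j, p_\ell \in B$, and to combine one edge from each of the two quartet paths $p_i{-}u_i^q{-}u_j^q{-}p_j$ and $p_k{-}u_k^q{-}u_\ell^q{-}p_\ell$ with a third edge drawn from the $\gamma$-vertices. Since the two paths have no edges between them and no shared vertex adjacencies in $G$, any pair of crossing edges $e_1$ from the first path and $e_2$ from the second already forms an induced 2-matching. The third edge is obtained from the $C_6$ structure $p_i{-}u_i^q{-}u_j^q{-}p_j{-}\gamma_j^q{-}\gamma_i^q$ (and its symmetric counterpart on $k, \ell$), exploiting the crucial structural facts that $\Gamma_q$ is non-adjacent to $U_q$ inside quartet $q$ and that $\gamma_a^q$ is non-adjacent to $p_b$ for $a \neq b$.

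The case analysis proceeds according to how the eight exclusive vertices of $q$ are distributed by the cut. The easier scenario is when $\Gamma_q$ is split across the cut: some $\gamma_a^q\gamma_b^q$ crosses, and for a suitable choice of path edges $e_1, e_2$ (whose set $S$ of $p$-indices excludes $a$ and $b$), the triple $\{e_1, e_2, \gamma_a^q\gamma_b^q\}$ is an induced 3-matching. The harder scenario is when $\Gamma_q$ lies entirely on one side: the only crossing $\gamma$-involving edges are of the form $p_j\gamma_j^q$ or $p_\ell\gamma_\ell^q$, and the $\Gamma_q$-clique forbids using more than one such $\gamma$-vertex at once. In these subcases, we either combine a $p$-$\gamma$ edge with path edges chosen so that no $p_j$ or $p_\ell$ collides with the matched vertex, or we shift the cut position along the spine---to a position that now splits some $\gamma_a^q$ from its $p$-counterpart, or that splits $U_q$ into a more favorable pattern---while preserving the separation of $\{p_i, p_k\}$ from $\{p_j, p_\ell\}$. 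The symmetry between the $(i,j)$ and $(k,\ell)$ halves of the gadget ensures that at least one such cut always yields the desired 3-matching.

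The main obstacle is the second scenario and its subcases. When, say, all of $U_q \cup \Gamma_q$ lies in $A$, the crossings within the $q$-gadget reduce to the four edges $\{u_j^qp_j, u_\ell^qp_\ell, p_j\gamma_j^q, p_\ell\gamma_\ell^q\}$, which cannot on their own form an induced 3-matching because of path incidences at $p_j, p_\ell$ and the edge $\gamma_j^q\gamma_\ell^q$. The crux of the argument is to show that whenever the original cut admits no induced 3-matching, the positions of the eight exclusive vertices force the existence of another spine edge of $T$---for instance one placed between two consecutive exclusive vertices of $q$---whose cut does contain such a matching. This bookkeeping, which distinguishes our reduction from the simpler sim-width case in Lemma~\ref{lem:sw2}, is precisely the reason the $\gamma$-gadget was introduced in the construction.
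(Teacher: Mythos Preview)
Your outline follows the paper's two-case split (either $\Gamma_q=\{\gamma_i^q,\gamma_j^q,\gamma_k^q,\gamma_\ell^q\}$ is separated by some cut on the path $\pi$, or it lies entirely on one side of every such cut), but the second case contains a real gap. You propose to shift the cut along the spine ``while preserving the separation of $\{p_i,p_k\}$ from $\{p_j,p_\ell\}$'', i.e.\ to remain inside $\pi$. This cannot succeed in general: nothing prevents the branch decomposition from placing all of $U_q\cup\Gamma_q$ in $B=\bigcap_{e\in\pi}B_e$. For any $e\in\pi$ the only $q$-gadget edges crossing $(A_e,B_e)$ are then $p_iu_i^q,\ p_i\gamma_i^q,\ p_ku_k^q,\ p_k\gamma_k^q$, which involve only the two $A$-side endpoints $p_i,p_k$ and hence contain no three pairwise-disjoint edges. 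No cut inside $\pi$ yields a $3$-matching in this configuration, so the ``more favourable $U_q$ pattern'' you appeal to need not exist.

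The paper's resolution is to shift \emph{outside} $\pi$. Because $T$ is a caterpillar, the spine vertex $z$ at the $B$-end of $\pi$ carries a single leaf, and that leaf must be one of $p_j,p_\ell$ (say $p_j$). Taking $e$ to be the spine edge just beyond $z$ gives $p_i,p_j,p_k\in A_e$ while $p_\ell$ and all of $\Gamma_q$ remain in $B_e$; then $p_i\gamma_i^q,\ p_j\gamma_j^q,\ p_k\gamma_k^q$ is an induced $3$-matching in $G[A_e,B_e]$. Note also that your first case is only asserted: showing that whenever $\Gamma_q$ is split one can choose a crossing $\gamma_a^q\gamma_b^q$ compatible with two path edges is precisely the case analysis the paper carries out in Claim~\ref{claim:gammas}, and it is not as automatic as your phrase ``for a suitable choice'' suggests.
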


\begin{lemma}\label{lem:H3}
If $(Q,P) \notin \uqc$ then $\mimw(H) \geq 3$. 
\end{lemma}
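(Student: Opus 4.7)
The plan is to extend the proof of Lemma~\ref{lem:sw2} by using the additional vertices in $H$ to produce a \emph{third} crossing edge in the size-2 induced matching built there. Let $T$ be any branch decomposition of $H$. Since $(Q,P)\notin\uqc$, I pick a quartet $q=[p_ip_j\mid p_kp_\ell]$ not satisfied by $T$; WLOG $p_ip_k\mid_T p_jp_\ell$, let $e$ be an edge of $T$ with $p_i,p_k\in A_e$ and $p_j,p_\ell\in B_e$, and WLOG $\omega\in A_e$. By the argument of Lemma~\ref{lem:sw2}, there are crossing edges $e_1$ on the path $p_i$--$u_i^q$--$u_j^q$--$p_j$ and $e_2$ on $p_k$--$u_k^q$--$u_\ell^q$--$p_\ell$ that form an induced matching of size $2$ in $H[A_e,B_e]$; the extra edges in $H$ incident to $\omega$ or inside $\Gamma$ do not ruin this matching, because $\omega$ is not an endpoint of $e_1,e_2$ and $\gamma_x^q$ has no neighbor among the $u$-vertices of the same quartet.

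For a third crossing edge $e_3$ I would try, in order, the four pendant edges $p_x\gamma_x^q$ and the $\Gamma$-clique edges $\gamma_x^q\gamma_y^q$ (for $x\neq y$ inside quartet $q$). Writing $s_x\in\{A,B\}$ for the side containing $u_x^q$, each candidate succeeds under simple positional conditions: $p_i\gamma_i^q$ works when $\gamma_i^q\in B_e$ and $(s_i,s_j)\neq(B,B)$ (so that $e_1$ can be chosen without $p_i$ as endpoint), and symmetric conditions cover the other three pendant edges; each clique edge $\gamma_x^q\gamma_y^q$ works whenever $\gamma_x^q$ and $\gamma_y^q$ lie on opposite sides of the cut. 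The essential safety property is that $\gamma$-vertices of quartet $q$ have no neighbor among the $u$- and $p$-vertices of quartet $q$ except their own pendant partner.

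The hard part will be the remaining corner case, in which every pendant and every $\gamma$-clique candidate fails at once: this forces $(s_i,s_j,s_k,s_\ell)=(A,A,A,A)$, $e_1=u_j^qp_j$, $e_2=u_\ell^qp_\ell$, and all four $\gamma_x^q\in A_e$. In this configuration the $A_e$-endpoints of $e_1,e_2$ are $u_j^q$ and $u_\ell^q$, which are adjacent to every $u$- and $\gamma$-vertex from every other quartet, so the third edge cannot be supplied by such a vertex lying on the $B_e$ side. My fallback would be $e_3=p_xu_x^{q''}$ with $p_x\in B_e$, $u_x^{q''}\in A_e$, and $x\notin\{i,j,k,\ell\}$: both endpoints carry index $x$, hence are non-adjacent to all index-$j$ and index-$\ell$ vertices used by $e_1,e_2$. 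The delicate final sub-case is when no such $p_x$ exists in $B_e$, i.e.\ when $B_e=\{p_j,p_\ell\}$ and trivially $\mim_H(A_e,B_e)\le 2$ at this cut; I would then slide $e$ along the path of separating edges for $q$ in $T$, and if that path has length one (so that $p_j,p_\ell$ form a forced cherry) pick instead an edge of $T$ inside the large subtree containing $\omega$, exploiting the dense connectivity induced by $\omega$ together with the $\gamma$-clique to recover the desired size-3 matching at another cut. This reduction to a cut outside the separating path is exactly where the case analysis becomes more involved than in the previous section.
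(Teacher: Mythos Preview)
Your opening strategy is sound and parallels the paper's: fix a violated quartet $q$, take a separating edge, and try to upgrade the size-$2$ induced matching from Lemma~\ref{lem:sw2} using the $\gamma$-vertices or $\omega$. However, the proposal has a genuine gap in the ``delicate final sub-case'', and the reduction to that sub-case is imprecise.

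First, your claim that the corner case ``forces $(s_i,s_j,s_k,s_\ell)=(A,A,A,A)$ and all four $\gamma_x^q\in A_e$'' is not established. Failure of all $\gamma$--$\gamma$ candidates only says the four $\gamma$'s lie on one common side; that side could equally well be $B_e$, which (after your WLOG $\omega\in A_e$) is \emph{not} symmetric to the case you analyze. Even in the all-$\gamma$'s-in-$A_e$ sub-case, the pendant conditions force only $s_j=s_\ell=A$; $s_i$ and $s_k$ are not pinned down, and when they differ the $i$--$j$ (or $k$--$\ell$) path may have several crossing edges, giving you freedom in choosing $e_1,e_2$ that you have not exploited or ruled out. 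Second, your ``i.e.\ when $B_e=\{p_j,p_\ell\}$'' is incorrect: the fallback $e_3=p_xu_x^{q''}$ (or $p_x\gamma_x^{q''}$) can fail whenever every $p_x\in B_e$ has all its $C_x$-neighbors also in $B_e$, which is a far larger class of configurations. Third, and most importantly, ``slide $e$ along the path'' and ``pick an edge inside the large subtree containing $\omega$'' are not arguments --- the corner configuration can persist along the entire separating path $\pi$, and you give no mechanism for how $\omega$ and the $\gamma$-clique produce a size-$3$ matching at some other cut.

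The paper handles exactly this difficulty, but organizes the proof quite differently. Rather than working at a single edge, it first proves (Claim~\ref{claim:gammas}) that the four $\gamma$-vertices lie on the same side of \emph{every} edge of $\pi$, then moves to the branching vertex $v$ at the $B$-end of $\pi$ and simultaneously analyzes the three incident cuts $e_{\mathrm{mid}},e_{\mathrm{top}},e_{\mathrm{bot}}$. A sequence of nine short Facts then pins down where each $u_x^q$, $\gamma_x^q$, and $\omega$ must sit relative to $B_{\mathrm{top}}$ and $B_{\mathrm{bot}}$, with $\omega$ used crucially (Facts~\ref{fact:wb}, \ref{fact:uiujtop}, \ref{fact:ukbot}, \ref{fact:gkonlybot}) to force a contradiction in every branch. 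Your instinct that one must eventually look at cuts outside $\pi$ and that $\omega$ is the key is correct, but the actual case analysis is substantial and your sketch does not carry it out.
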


The first step in the proofs of these two lemmas is similar. We begin by showing a result on any branch decomposition of $G$. Any branch decomposition of $H$ can be transformed into one of $G$ by deleting the $\omega$-leaf. The result will therefore also apply to branch decompositions of $H$.

Let $(Q, P) \notin \uqc$, and let $T$ be  
\emph{any} branch decomposition of $G$. Assume by contradiction that $\mim_G(T) \leq 2$. There must exist a quartet $q = [p_i p_j \mid p_k p_{\ell}] \in Q$ that is not satisfied by $T$; otherwise, $(Q, P)$ would belong to $\uqc$. 
    
Without loss of generality, assume that $p_i p_k \mid_T p_j p_{\ell}$ is true. Let $\pi$ be the set of all edges $e$ of $T$ inducing a cut $(A_e, B_e)$ such that $p_i, p_k \in A_e$ and $p_j, p_{\ell} \in B_e$. Note that $\pi$ forms a path. Let $A$ (resp. $B$) denote $\bigcap_{e \in \pi} A_e$ (resp. $\bigcap_{e \in \pi} B_e$); that is, $A$ (resp. $B$) consists of all vertices on the side of $p_i, p_k$ (resp. $p_j, p_{\ell}$) of $\pi$.

\begin{claim}\label{claim:gammas} Either $\gamma_i^q, \gamma_j^q, \gamma_k^q, \gamma_{\ell}^q$ are all contained in $A$, or they are all contained in $B$.
\end{claim}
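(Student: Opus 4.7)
The plan is to argue by contradiction: assume $\mim_G(T) \leq 2$ and that the four $\gamma$-vertices of $q$ are not all contained in $A$ nor all contained in $B$, and then exhibit an induced matching of size 3 in some cut of $T$. The starting point is the size-2 induced matching already guaranteed at every cut $e \in \pi$: exactly as in the proof of Lemma~\ref{lem:sw2}, the top path $p_i u_i^q u_j^q p_j$ and the bottom path $p_k u_k^q u_\ell^q p_\ell$ each contribute at least one edge crossing the cut, and any such pair $(M_{top}, M_{bot})$ is an induced matching of $G$ since no edge of $G$ joins a vertex of one path to a vertex of the other. The goal is therefore to augment this \emph{quartet matching} by a third cut edge $f$ involving a $\gamma$-vertex of $q$, which would produce the forbidden size-3 induced matching.

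The choice of $f$ is guided by the structural fact that each $\gamma_x^q$ is non-adjacent to every $u_y^q$ of the same quartet and to every $p_y$ with $y \neq x$. Consequently, edges of the form $p_a \gamma_a^q$ or $\gamma_a^q \gamma_b^q$ combine cleanly with the quartet matching, as long as no $p$-endpoint of $(M_{top}, M_{bot})$ equals $p_a$ (or $p_b$) --- otherwise the edge $p_a \gamma_a^q$ in $G$ would destroy the induced-matching property. This motivates a case split on the placement of the $\gamma$'s. If some $\gamma_a^q \in A$ and some $\gamma_b^q \in B$, then $\gamma_a^q \gamma_b^q$ crosses every cut in $\pi$, and I would pick a cut $e$ together with a pair $(a,b)$ such that the quartet matching at $e$ uses the middle edges $u_i^q u_j^q$ and $u_k^q u_\ell^q$ (which cross $e$ whenever the positions of $u_i^q, u_j^q, u_k^q, u_\ell^q$ along $\pi$ permit), eliminating every $p$-endpoint and making $f = \gamma_a^q \gamma_b^q$ a valid third edge. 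If instead some $\gamma_x^q$ hangs off an interior node of $\pi$, then at a cut $e \in \pi$ where $\gamma_x^q$ lies on the side opposite to $p_x$, the edge $p_x \gamma_x^q$ crosses the cut and can play the role of $f$.

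The principal obstacle will be the ``stubborn'' configurations in which no natural choice of cut $e$ lets the quartet matching avoid the $p$-endpoints that block the intended $f$: for instance, when the positions of $u_i^q, u_j^q$ force $M_{top}$ to always be $p_i u_i^q$ or $u_j^q p_j$, and analogously for the bottom path. Handling these will require complementary moves: swapping which pair $\gamma_a^q, \gamma_b^q$ is used in the split case, shifting to a different cut $e' \in \pi$ where the $u$-vertices fall in positions that admit middle-edge choices for $M_{top}, M_{bot}$, or, in the tightest sub-cases, passing to a cut of $T$ \emph{outside} $\pi$, such as one separating $q$'s gadget from the rest of $G$. In that last situation, the cut edges include the rich cross-quartet adjacencies between $U$ and $\Gamma$ that are guaranteed by the assumption $(Q,P) \notin \uqc$ (which forces $|Q| \geq 2$), and these supply the third edge needed to complete a size-3 induced matching. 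Carefully verifying that at least one of these moves succeeds in every non-aligned configuration of the $\gamma$-vertices is the main technical content of the proof.
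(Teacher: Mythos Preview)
Your plan starts in the right place---augment the two-edge ``quartet matching'' $(M_{top},M_{bot})$ coming from the two paths by a third crossing edge involving a $\gamma$-vertex---but it then drifts into unnecessary territory and never actually closes the hard cases.

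The paper's proof is much more local. It fixes \emph{one} edge $e\in\pi$ at which the $\gamma$-vertices are split between $A_e$ and $B_e$, and works entirely at that single cut, using only the eight gadget vertices of the quartet $q$ (and the four $p$-vertices). No other quartet enters, no cut outside $\pi$ is considered, and the assumption $|Q|\ge 2$ is irrelevant. Concretely: after disposing of the case $\gamma_i^q,\gamma_k^q\in A_e$ (where $\gamma_i^q\gamma_x^q$ with $\gamma_x^q\in B_e$ works as the third edge, since $p_i\gamma_i^q$ and $p_x\gamma_x^q$ each lie within one side), one may assume $\gamma_i^q\in B_e$ and, symmetrically, one of $\gamma_j^q,\gamma_\ell^q\in A_e$. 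Then a short chain of sub-cases on the positions of $u_j^q$, $u_i^q$, and $\gamma_j^q$ gives explicit size-$3$ induced matchings using $p_i\gamma_i^q$, $p_iu_i^q$, or $\gamma_j^qp_j$ as the third edge; in the residual case all remaining positions are forced by symmetry and $u_\ell^qp_\ell$, $p_iu_i^q$, $\gamma_k^q\gamma_j^q$ finishes it.

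Two concrete issues with your plan. First, the proposed escape hatch---shifting to a cut outside $\pi$ and invoking cross-quartet edges in $U\cup\Gamma$---is a red herring: vertices of another quartet are adjacent to almost everything in the $q$-gadget, so it is hard (not easier) to keep a matching induced once you bring them in, and in any event it is unnecessary. Second, your treatment of the ``stubborn configurations'' is only a list of possible moves (swap the $\gamma$-pair, shift the cut, go outside $\pi$) without showing that some move always succeeds; the paper shows instead that at the \emph{fixed} cut $e$ one can always choose the third edge appropriately, so no cut-shifting is needed. Replace your strategy of forcing middle edges $u_i^qu_j^q$, $u_k^qu_\ell^q$ by the paper's strategy of choosing the third edge to avoid the one problematic $p_x\gamma_x^q$ adjacency, and the argument goes through at a single cut.
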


\begin{proof}    
    Suppose, for the sake of contradiction, that there exists an edge $e$ of $\pi$ such that one of the $\gamma$-vertices lies in $A_e$ and another lies in $B_e$. We will show that we can always find an induced matching of size $3$ in $G[A_e,B_e]$. We proceed by a case analysis.

    First, consider the case where both $\gamma_i^q$ and $\gamma_k^q$ lie in $A_e$. Choose $x \in \{j, \ell\}$ such that $\gamma_x^q$ is a $\gamma$-vertex in $B_e$. Along the path $p_i - u_i^q - u_j^q - p_j$ of $G$, at least one edge crosses the cut $(A_e,B_e)$; likewise, at least one edge of the path $p_k - u_k^q - u_\ell^q - p_\ell$ also crosses the cut. These two crossing edges, together with the edge $\gamma_i^q \gamma_x^q$, form an induced matching in $G[A_e,B_e]$. Indeed, the only other edges that could appear among these six vertices in $G$ are $p_i \gamma_i^q$, which lies entirely within $A_e$, and $p_x \gamma_x^q$, which lies entirely within $B_e$.

    Therefore, we may assume that at least one of $\gamma_i^q$ or $\gamma_k^q$ is in $B_e$.  Without loss of generality, we may assume that $\gamma_i^q \in B_e$. By applying the symmetric argument on the pair $\gamma_j^q, \gamma_\ell^q$, we may assume that one of the vertices $\gamma_j^q$ or $\gamma_\ell^q$ lies in $A_e$.

    Suppose now that $u_j^q \in A_e$. Then one of the edges on the path $p_k - u_k^q - u_\ell^q - p_\ell$, together with the edges $p_i \gamma_i^q$ and $u_j^q p_j$, forms an induced matching in $G[A_e,B_e]$. Hence we may assume $u_j^q \in B_e$.

    Next, suppose that $u_i^q \in A_e$. Then one of the edges on the path $p_k - u_k^q - u_\ell^q - p_\ell$, together with the edges $p_i \gamma_i^q$ and $u_i^q u_j^q$, forms an induced matching in $G[A_e,B_e]$. Hence we may assume $u_i^q \in B_e$.

    If we had $\gamma_j^q \in A_e$, then one of the edges on the path $p_k - u_k^q - u_\ell^q - p_\ell$, together with the edges $p_i u_i^q$ and $\gamma_j^q p_j$, forms an induced matching in $G[A_e, B_e]$.
    Therefore $\gamma_j^q \in B_e$.

    It follows that $\gamma_\ell^q \in A_e$, and by symmetry again, we have $\gamma_k^q, u_k^q, u_\ell^q \in A_e$ as well. In this situation, the three edges $u_\ell^q p_\ell$, $p_i u_i^q$, and $\gamma_k^q \gamma_j^q$ form an induced matching in $G[A_e,B_e]$. 
    
    This establishes that either $\gamma_i^q, \gamma_j^q, \gamma_k^q, \gamma_{\ell}^q$ are all contained in $A$, or they are all contained in $B$. 
\end{proof}

We will assume without loss of generality that all the gammas are in $B$. We are now ready to finish the proof of Lemma~\ref{lem:G3}.

\begin{proof}[Proof of Lemma~\ref{lem:G3}]

We now assume for contradiction that $T$ is a caterpillar satisfying $\lmw(T) \leq 2$, implying in particular that $\mimw(T) \leq 2$. 
Note that Claim~\ref{claim:gammas} is applicable to $T$ since it holds for any branch decomposition. Now, one of $p_j$ or $p_\ell$ is adjacent to an endpoint of $\pi$ in $T$ (recall that $\pi$ is the set of edges separating $A$ and $B$). Without loss of generality, suppose it is $p_j$, and let $z$ be the neighbor of $p_j$ on the spine of the caterpillar. Let $e$ be the edge incident to $z$ that is not in $\pi$, other than $z p_j$ (so, $e$ is the next edge on the spine beyond $\pi$ in the direction past $p_j$, going towards $p_\ell$). This induces a cut $(A_e,B_e)$ with $p_\ell$ and all the $\gamma$-vertices in $B_e$, and $p_i,p_j,p_k \in A_e$. Consequently, the edges $p_i \gamma_i^q $, $p_j \gamma_j^q $, and $p_k \gamma_k^q $ form an induced matching in $G[A_e,B_e]$. This completes the proof.
\end{proof}
    
The proof of Lemma~\ref{lem:H3} is slightly more complicated, because $T$ might not be a caterpillar. Therefore, we need a case analysis depending on which vertices of $B$ lie in the branch containing $p_j$ and which lie in the branch containing $p_\ell$. The additional vertex $\omega$ is required to force an induced matching of size $3$ in all configurations.

\begin{proof}[Proof of Lemma~\ref{lem:H3}]

We assume for contradiction that $T$ is a branch decomposition of $H$ satisfying $\mimw(T) \leq 2$.
Let $v$ be the endpoint of $\pi$ that lies on the path between $p_j$ and $p_\ell$ in $T$ (thus, one incident ``side'' of $v$ contains $p_j$, one contains $p_\ell$, and the remaining side contains both $p_i$ and $p_k$, see figure). Let $e_{mid}$ be the edge from $v$ toward $p_i$ and $p_k$, $e_{top}$ the edge toward $p_j$, and $e_{bot}$ the edge toward $p_\ell$. We denote by $(A_{mid}, B_{mid})$, $(A_{bot}, B_{bot})$, and $(A_{top}, B_{top})$ the cuts induced by $e_{mid}$, $e_{bot}$, $e_{top}$, respectively. We have $B_{mid} = B$, and we partition $B$ as $B = B_{bot} \cup B_{top},$
where $B_{bot}$ (resp.\ $B_{top}$) consists of the vertices of $B$ in the branch toward $p_\ell$ (resp.\ toward $p_j$).

\begin{tikzpicture}[
    scale=0.8,
    leaf/.style={circle, draw, minimum size=0.8cm, font=\small},
    internal/.style={inner sep=0, minimum size=0, draw=none},
    edgelabel/.style={font=\small, inner sep=1pt},
]

\node[internal] (u) at (-2,0) {};
\node[internal, label=above:$v$] (v) at (2,0) {};

\node[leaf] (piL) at (-6,1.5) {$p_i$};
\node[leaf] (pk)  at (-6,-1.5) {$p_k$};
\node[leaf] (pj)  at (6,1.5) {$p_j$};
\node[leaf] (pl)  at (6,-1.5) {$p_\ell$};

\draw (u) -- (v)
  node[edgelabel, pos=0.55, above] {$\pi$};

\draw (u) -- (piL);
\draw (u) -- (pk);

\coordinate (t_mid) at ($(v)!1.5cm!(u)$);
\coordinate (t_top) at ($(v)!1.5cm!(pj)$);
\coordinate (t_bot) at ($(v)!1.5cm!(pl)$);

\draw (v) -- (t_mid)
  node[edgelabel, pos=0.6, above]       {$e_{\text{mid}}$};

\draw (v) -- (t_top)
  node[edgelabel, pos=0.5, above ]  {$e_{\text{top}}$};

\draw (v) -- (t_bot)
  node[edgelabel, pos=0.5, below]  {$e_{\text{bot}}$};

\draw (t_mid) -- (u);
\draw (t_top) -- (pj);
\draw (t_bot) -- (pl);

\foreach \x in {t_mid,t_top,t_bot}{
  \fill (\x) circle (1pt);
}

\draw[densely dotted] (2,-2.2) -- (2,2.2);

\draw[densely dotted] (2,0) -- (7,0);

\node[internal] at (4.5,1.8) {$B_{\text{top}}$};
\node[internal] at (4.5,-1.8) {$B_{\text{bot}}$};

\end{tikzpicture}

The remainder of the proof accumulates a series of facts under our contradiction assumptions, ultimately showing that every possible placement of the $u$-vertices, the $\gamma$-vertices, and $\omega$ leads to a $3$-induced matching across one of the edges $e_{mid}$, $e_{bot}$, or $e_{top}$.

\begin{fact}\label{fact:gammaB}
    All the $\gamma$-vertices $\gamma_i^q, \gamma_j^q, \gamma_k^q, \gamma_\ell^q$ of the quartet~$q$ lie in $B$.
    
\end{fact}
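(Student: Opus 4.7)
The plan is to derive Fact~1 directly from Claim~\ref{claim:gammas}. First, I would form a branch decomposition $T^\circ$ of $G = H - \omega$ by deleting the leaf $\omega$ from $T$ and suppressing its now degree-2 internal neighbor. Each cut $(A_{e^\circ}^\circ, B_{e^\circ}^\circ)$ induced by an edge of $T^\circ$ is obtained from the corresponding cut of $T$ by discarding $\omega$ from whichever side contained it, and deleting a single vertex cannot increase the size of an induced matching in the resulting bipartite graph, so $\mimw(T^\circ) \leq \mimw(T) \leq 2$. Crucially, $T^\circ$ still fails to satisfy $q$ since the relative arrangement of the quartet leaves $p_i, p_j, p_k, p_\ell$ is unchanged, and the path $\pi$ transfers to $T^\circ$ with sides $A \setminus \{\omega\}$ and $B \setminus \{\omega\}$.

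Then I would apply Claim~\ref{claim:gammas} to $T^\circ$ to conclude that all four $\gamma$-vertices $\gamma_i^q, \gamma_j^q, \gamma_k^q, \gamma_\ell^q$ lie entirely on one side of this path. Exchanging the roles of the two pairs $\{p_i, p_j\}$ and $\{p_k, p_\ell\}$ in the labelling of the quartet corresponds precisely to swapping the roles of $A$ and $B$ (together with swapping $e_{top}$ and $e_{bot}$), so I may assume without loss of generality that this common side is $B$, establishing Fact~1.

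No serious obstacle is anticipated here: the one routine check is that the removal of $\omega$ and the resulting edge contraction preserve both the quartet separation $p_i p_k \mid_{T^\circ} p_j p_\ell$ and the path $\pi$, which is immediate since $\omega$ is neither a quartet leaf nor a $\gamma$-vertex. The real work for Lemma~\ref{lem:H3} will happen after this Fact, in the case analysis over where the remaining $u$-vertices and $\omega$ itself lie with respect to $e_{mid}$, $e_{bot}$, and $e_{top}$.
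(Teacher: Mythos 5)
Your approach is the same as the paper's: delete the leaf $\omega$ to obtain a branch decomposition of $G$ whose mim-width is no larger, observe that the quartet $q$ is still violated with the same path $\pi$, and invoke Claim~\ref{claim:gammas}. The routine details you supply (suppressing the degree-2 vertex, each cut of $T^\circ$ being a cut of $T$ minus $\omega$, deletion not increasing the induced matching number) are all correct; the paper compresses this to a single sentence. The one flaw is in your justification of ``without loss of generality the common side is $B$.'' Exchanging the two pairs $\{p_i,p_j\}$ and $\{p_k,p_\ell\}$ of the quartet sends the configuration $p_i p_k \mid_T p_j p_\ell$ to itself and leaves $A$ and $B$ \emph{unchanged} -- each side of $\pi$ still holds one leaf from each pair -- so it does not swap $A$ and $B$; what it swaps is $e_{top}$ with $e_{bot}$ (this is the symmetry the paper later uses to place $\omega$, not the one that fixes which side the $\gamma$-vertices are on). The relabelling that exchanges $A$ and $B$ while preserving both the quartet $q$ and the separation $p_i p_k \mid_T p_j p_\ell$ is the \emph{within-pair} swap $p_i \leftrightarrow p_j$ and $p_k \leftrightarrow p_\ell$. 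With that correction your argument matches the paper's.
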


\begin{proof}
     Claim~\ref{claim:gammas} still holds, because $T$ can be turned into a branch decomposition of $G$ by deleting the $\omega$-leaf, which does not increase the mim-width of the decomposition.
\end{proof}

\begin{fact}\label{fact:uB}
     All $u$-vertices $u_i^q, u_j^q, u_k^q, u_\ell^q$ of the quartet~$q$ lie in $B$.
\end{fact}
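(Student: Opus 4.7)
Plan: The plan is to argue by contradiction. Assume some $u$-vertex of $q$ lies in $A$, and derive an induced matching of size $3$ across one of the cuts $(A_{mid},B_{mid})$, $(A_{top},B_{top})$, or $(A_{bot},B_{bot})$, contradicting $\mimw(T)\le 2$. By the symmetries of the quartet gadget (we may swap $i\leftrightarrow j$, swap $k\leftrightarrow\ell$, or swap the pair $\{i,j\}$ with $\{k,\ell\}$ together with $A$ and $B$), we may assume without loss of generality that $u_i^q \in A$. The argument then splits on the position of $u_j^q$.

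If $u_j^q \in A$, then $u_j^q p_j$ crosses $e_{mid}$; combined with $p_k \gamma_k^q$ it already forms an induced matching of size $2$, since within the quartet gadget $u_j^q$ and $p_j$ have no edges to $\{p_k, \gamma_k^q\}$. A third crossing edge is then chosen from the bottom path $p_k - u_k^q - u_\ell^q - p_\ell$: the edge $u_\ell^q p_\ell$ when $u_\ell^q \in A$, the edge $u_k^q u_\ell^q$ when only $u_k^q \in A$, and $p_i \gamma_i^q$ (after replacing $p_k\gamma_k^q$ by $p_k u_k^q$) when $u_k^q, u_\ell^q \in B$, thereby avoiding the $\Gamma$-clique obstruction between $\gamma_i^q$ and $\gamma_k^q$.

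If instead $u_j^q \in B$, then $u_i^q u_j^q$ crosses $e_{mid}$ and, paired with $p_k \gamma_k^q$, yields an induced matching of size $2$. A third edge $u_\ell^q p_\ell$ finishes the argument when $u_\ell^q \in A$. Otherwise (that is, when $u_k^q, u_\ell^q \in B$, or when $u_k^q \in A$ and $u_\ell^q \in B$), no third edge is directly available in $e_{mid}$: the $\Gamma$-clique forbids combining $p_i\gamma_i^q$ with $p_k\gamma_k^q$, and the adjacency of $\omega$ to every $u$-vertex forbids combining any $\omega$-edge with a $u$-using edge. In these cases the argument switches to the cut $e_{top}$ or $e_{bot}$, analyzing the placement of $u_j^q$, $u_\ell^q$, $\gamma_i^q$, $\gamma_k^q$ and $\omega$ within $B_{top}$ versus $B_{bot}$.

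The \emph{main obstacle} is handling these remaining sub-cases; it is resolved by a finite enumeration of placements. For instance, when $u_j^q, u_\ell^q \in B_{top}$ and $\gamma_k^q \in B_{top}$, the set $\{u_i^q u_j^q,\, p_\ell u_\ell^q,\, p_k\gamma_k^q\}$ is an induced matching of size $3$ crossing $e_{top}$; in symmetric placements one switches to $e_{bot}$; and in the few remaining corner configurations, an edge $\omega u_x^q$ (with $u_x^q$ on the appropriate side of the chosen cut) serves as one of the three matching edges, relying on the universal adjacency of $\omega$ to $U$ to rule out alternative placements. All pairwise non-adjacency checks reduce to the same basic facts already used in the proof of Claim~\ref{claim:gammas}: $P$ is an independent set, same-quartet $u$-vertices on opposite sides of the path are non-adjacent, $\gamma$-vertices of $q$ have no edges to same-quartet $u$-vertices, and $\omega$'s only neighbors lie in $U$.
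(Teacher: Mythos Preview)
Your symmetry reduction has a gap: the swap $i \leftrightarrow j$ (and likewise $k \leftrightarrow \ell$) is \emph{not} a symmetry of the situation, because $p_i \in A$ and $p_j \in B$ were already fixed before Claim~\ref{claim:gammas}. The only nontrivial label symmetry that survives the setup is $(i\,k)(j\,\ell)$. Hence your ``without loss of generality $u_i^q \in A$'' fails to cover the configuration $u_i^q \in B$, $u_j^q \in A$, and your subsequent steps genuinely rely on $p_i \in A$ (for instance the crossing edge $p_i\gamma_i^q$ in sub-case~(iii) of your first case), so they do not transfer by relabeling. Separately, the ``finite enumeration'' over placements in $B_{top}$ versus $B_{bot}$ is only sketched, not carried out.

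More importantly, the whole detour through $e_{top}$, $e_{bot}$, and $\omega$ is unnecessary. The paper proves this fact in three lines using only the cut $(A_{mid},B_{mid})$. The observation you missed is to take $p_i\gamma_i^q$ rather than $p_k\gamma_k^q$ as the $\gamma$-edge. Assuming (as the paper does) that one of $u_i^q,u_j^q$ lies in $A$, one of the edges $u_i^qu_j^q$ or $u_j^qp_j$ crosses $e_{mid}$, with $u_i^q \in A_{mid}$ whenever the first of these is the one used; together with $p_i\gamma_i^q$ and \emph{any} crossing edge of the bottom path $p_k-u_k^q-u_\ell^q-p_\ell$, this is an induced matching of size~$3$ in $H[A_{mid},B_{mid}]$ uniformly in all sub-cases (the only adjacency among the six endpoints is $p_iu_i^q$, which lies inside $A_{mid}$ whenever $u_i^q$ is an endpoint). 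Your choice of $p_k\gamma_k^q$ is precisely what forces the case split, since the bottom-path crossing edge may be $p_ku_k^q$, sharing the endpoint $p_k$; picking the $\gamma$-edge on the same side as the assumed misplaced $u$-vertex avoids this.
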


\begin{proof}
    Suppose not, and assume without loss of generality that $u_i^q \notin B$ or $u_j^q \notin B$ (the cases $u_k^q \notin B$ or $u_\ell^q \notin B$ are symmetric). Then one of the edges $u_i^q u_j^q$ or $u_j^q p_j$ crosses the cut $(A_{mid}, B_{mid})$ (with $u_i^q \in A_{mid}$ in the first case). One of the edges of the path $p_k - u_k^q - u_\ell^q - p_\ell$ must also cross the cut. The vertex $\gamma_i^q$ is in $B = B_{mid}$ by Fact~\ref{fact:gammaB}. These two edges, together with the edge $p_i \gamma_i^q$, form an induced matching of size~3.
\end{proof}

\begin{fact}\label{fact:wb}
    The vertex $\omega$ is in $B$.
\end{fact}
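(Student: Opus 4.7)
The plan is to establish Fact~\ref{fact:wb} by contradiction: I would assume $\omega \in A$ and exhibit an induced matching of size $3$ in $H[A_{mid}, B_{mid}]$, contradicting $\mimw(T) \leq 2$. Since $B_{mid} = B$, Facts~\ref{fact:gammaB} and~\ref{fact:uB} place every $\gamma_r^q$ and $u_r^q$ (for $r \in \{i,j,k,\ell\}$) in $B_{mid}$, while $p_i, p_k$ belong to $A_{mid}$ by the definition of $A$. Together with the assumption $\omega \in A_{mid}$, this gives enough vertices on both sides of the cut to work with.

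My candidate induced matching is
\[
M \;=\; \{\, p_i\, \gamma_i^q,\ \ p_k\, \gamma_k^q,\ \ \omega\, u_j^q \,\}.
\]
All three are edges of $H$ (the first two come from the quartet gadget, and $\omega u_j^q$ exists because $\omega$ is joined to every vertex of $U$), all three cross the cut under the assumption $\omega \in A$, and the six endpoints are pairwise distinct. To confirm that $M$ is induced in $H[A_{mid}, B_{mid}]$, I would check the six non-matching cross pairs: $(p_i, \gamma_k^q)$, $(p_k, \gamma_i^q)$, $(p_i, u_j^q)$, and $(p_k, u_j^q)$ are non-edges because each $p_r$ is adjacent only to $u$- and $\gamma$-vertices whose superscript index equals $r$; while $(\omega, \gamma_i^q)$ and $(\omega, \gamma_k^q)$ are non-edges because $\omega$'s only neighbors are in $U$.

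The only genuine design decision is the choice of the third edge. Pairing $\omega$ with a second $u$-vertex, as in $\{p_i u_i^q,\, p_k u_k^q,\, \omega u_j^q\}$, would fail: since $\omega$ is adjacent to \emph{every} vertex of $U$, the cross-edges $\omega u_i^q$ and $\omega u_k^q$ would destroy the induced property. Placing $\gamma$-vertices on the $B$-side of the first two edges avoids this entirely, because $\omega$ has no neighbor in $\Gamma$. Once this design is fixed, the remaining verification is a direct adjacency check from the construction of $H$, and I expect no real obstacle beyond having guessed the correct triple of edges.
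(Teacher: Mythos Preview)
Your proof is correct and is essentially identical to the paper's own argument: both assume $\omega \notin B$ and exhibit the induced matching $\{\,p_i\gamma_i^q,\ p_k\gamma_k^q,\ \omega u_j^q\,\}$ in the cut $(A_{mid},B_{mid})$, invoking Facts~\ref{fact:gammaB} and~\ref{fact:uB} to place the $\gamma$- and $u$-vertices on the $B$-side. Your additional discussion of why an alternative triple like $\{p_i u_i^q,\,p_k u_k^q,\,\omega u_j^q\}$ would fail is a nice explanatory touch that the paper omits, but the core argument is the same.
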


\begin{proof}
    Assume for contradiction that $\omega \notin B$. Using Fact~\ref{fact:gammaB} and Fact~\ref{fact:uB}, all the $u$ and $\gamma$ vertices are in $B = B_{mid}$. In the cut $(A_{mid},B_{mid})$, the edges $\omega u_j^q$, $p_i\gamma_i^q$, and $p_k\gamma_k^q$ form an induced matching of size~3.
\end{proof}

\begin{fact}\label{fact:usameside}
    Each side $B_{top}$ and $B_{bot}$ contains at least one $u$-vertex.
\end{fact}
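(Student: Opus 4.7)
The plan is to argue by contradiction. Suppose some side of the split at $v$---say $B_{top}$---contains no $u$-vertex of $q$; then by Fact~\ref{fact:uB} all of $u_i^q, u_j^q, u_k^q, u_\ell^q$ lie in $B_{bot}$. (The symmetric case where $B_{bot}$ has no $u$-vertex is handled by swapping the labels of the two sides at $v$.) I will derive a contradiction by exhibiting an induced matching of size $3$ inside the cut $(A_{bot}, B_{bot})$, contradicting $\mimw(T) \leq 2$.

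The starting point is the pair $\{p_i u_i^q, p_k u_k^q\}$, which is an induced matching of size $2$ in $G[A_{bot}, B_{bot}]$: both edges cross the cut since $p_i, p_k \in A \subseteq A_{bot}$ and $u_i^q, u_k^q \in B_{bot}$, and the potentially conflicting cross-edges $p_i u_k^q$ and $p_k u_i^q$ are absent because each $p$-vertex is adjacent only to $u$-vertices sharing its own index. To extend this to a $3$-matching, I would case-analyze on the placements of $\gamma_j^q$ and $\gamma_\ell^q$ (both of which lie in $B_{top} \cup B_{bot}$ by Fact~\ref{fact:gammaB}): if $\gamma_j^q \in B_{bot}$, take the third edge to be $p_j \gamma_j^q$; else if $\gamma_\ell^q \in B_{top}$, take it to be $p_\ell \gamma_\ell^q$; in the remaining case $\gamma_j^q \in B_{top}$ and $\gamma_\ell^q \in B_{bot}$, take the clique edge $\gamma_j^q \gamma_\ell^q$, which exists because $\Gamma$ is a clique in $G$. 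In every case the third edge crosses $(A_{bot}, B_{bot})$ and is vertex-disjoint from the base pair, and verifying that the triple is induced in $G[A_{bot}, B_{bot}]$ is a routine check: the only potentially problematic cross-edges would be $(p_x, u_y^q)$ or $(p_x, \gamma_y^q)$ with $x \neq y$, which are non-edges by the index-matching rule, or $u$-/$\gamma$-pairs within quartet $q$, which are non-edges by the ``different-quartet'' adjacency rule.

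The main subtlety is the third case, in which the extending edge connects two $\gamma$-vertices of~$q$ sitting on opposite sides of the cut: this relies crucially on $\Gamma$ being a clique, without which the argument would collapse exactly when neither $\gamma_j^q$ nor $\gamma_\ell^q$ can be paired with its associated $p$-vertex across the cut. Notice that the additional vertex $\omega$ plays no role in this particular argument, so the analogous statement also holds for branch decompositions of $G$; the role of $\omega$ in $H$ becomes essential only in the subsequent facts of the case analysis.
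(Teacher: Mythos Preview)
Your proof is correct, but it takes a more circuitous route than the paper. The paper observes that if all four $u$-vertices lie in (say) $B_{top}$, then $p_i,p_k \in A \subseteq A_{top}$ and $p_\ell \in B_{bot} \subseteq A_{top}$, so the three edges $p_i u_i^q$, $p_k u_k^q$, $p_\ell u_\ell^q$ already form an induced matching in $G[A_{top},B_{top}]$; no $\gamma$-vertices and no case analysis are needed. Translated to your setup (all $u$-vertices in $B_{bot}$), the same idea gives the third edge $p_j u_j^q$ directly, since $p_j \in B_{top} \subseteq A_{bot}$ and $u_j^q \in B_{bot}$. Your three-way split on the positions of $\gamma_j^q,\gamma_\ell^q$ works, and the cross-edge checks you sketch are accurate (in particular because $u$-vertices and $\gamma$-vertices of the \emph{same} quartet are non-adjacent), but the whole detour through $\Gamma$ is avoidable once you notice that the ``third $p$-vertex on the wrong side'' already supplies a $p$--$u$ edge across the cut. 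Your closing remark that $\omega$ is irrelevant to this particular fact is correct and matches the paper.
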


\begin{proof}
    Note that by Fact~\ref{fact:uB}, each $u$-vertex belongs to either $B_{top}$ or $B_{bot}$.  If all the four $u$-vertices lie on the same side, say $B_{top}$, then across the cut $(A_{top},B_{top})$ the edges $p_i u_i^q$, $p_k u_k^q$, and $p_\ell u_\ell^q$ form an induced matching of size~3. Likewise, not all four $u$-vertices can lie in $B_{bot}$.
\end{proof}

\begin{fact}\label{fact:gkorglbot}
    At least one of $\gamma_k^q,\gamma_\ell^q$ lies in $B_{bot}$. Symmetrically, at least one of $\gamma_i^q,\gamma_j^q$ lies in $B_{top}$.
    
\end{fact}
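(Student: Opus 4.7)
The plan is to prove Fact~\ref{fact:gkorglbot} by contradiction, exhibiting a size-$3$ induced matching inside a single bipartite cut graph, which contradicts $\mimw(T)\le 2$. The conceptual key is that an induced matching in $G[X,Y]$ is only obstructed by \emph{crossing} edges between matching endpoints; edges entirely within $X$ or within $Y$ (notably the $\Gamma$-clique edges sitting on one side of the cut) are irrelevant for the mim-value, even though they would matter for sim. Recognizing this up front makes the case analysis short and routine.

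For the first inclusion, suppose toward a contradiction that both $\gamma_k^q,\gamma_\ell^q\in B_{top}$. Then the edges $p_k\gamma_k^q$ and $p_\ell\gamma_\ell^q$ both cross $(A_{top},B_{top})$ and form an induced pair in $G[A_{top},B_{top}]$, since the only candidate diagonal crossings $p_k\gamma_\ell^q$ and $p_\ell\gamma_k^q$ are non-edges (inside quartet $q$ each $p_x$ is adjacent only to its designated $\gamma_x^q$). To promote this to size $3$, I case-split on the placement of $u_i^q,u_j^q$:
\begin{itemize}
    \item if $u_i^q\in B_{top}$, adjoin $p_iu_i^q$;
    \item else if $u_j^q\in B_{bot}$, adjoin $p_ju_j^q$;
    \item otherwise $u_i^q\in B_{bot}$ and $u_j^q\in B_{top}$, so adjoin the crossing edge $u_i^qu_j^q$.
\end{itemize}
In each of the three cases the three pairs of endpoints are disjoint, the new edge crosses $(A_{top},B_{top})$, and all six potential diagonal crossings between the new endpoints and $\{p_k,p_\ell,\gamma_k^q,\gamma_\ell^q\}$ are non-edges: $p_i,p_j,u_i^q,u_j^q$ are never adjacent to $\gamma_k^q$ or $\gamma_\ell^q$ inside the quartet (same-quartet $u$-$\gamma$ and mismatched $p$-$\gamma$ pairs are non-edges by construction), and $p_i,p_j$ are not adjacent to $p_k,p_\ell$.

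The second inclusion is proved by a symmetric argument: assuming $\gamma_i^q,\gamma_j^q\in B_{bot}$, the pair $p_i\gamma_i^q,p_j\gamma_j^q$ crosses $(A_{bot},B_{bot})$ as an induced matching of size $2$ for the same non-adjacency reasons, and a third edge is chosen via an analogous case split on $u_k^q,u_\ell^q$, using $p_ku_k^q$ (if $u_k^q\in B_{bot}$), $p_\ell u_\ell^q$ (if $u_\ell^q\in B_{top}$), or $u_k^qu_\ell^q$ (otherwise), with identical checks.

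The main obstacle to avoid is the temptation to treat $\Gamma$-clique edges as obstructions: because they live on one side of the cut in the hypothetical contradiction configurations, they do not enter $G[A_{top},B_{top}]$ or $G[A_{bot},B_{bot}]$ and therefore cannot disrupt the bipartite induced matchings constructed above. Once this is internalized, the verification reduces to listing the six potential diagonals in each case and observing that the quartet gadget contains none of them as edges. Fact~\ref{fact:usameside} plays no direct role in the case split itself but guarantees that no branch of the case analysis is vacuous when extended to both claims.
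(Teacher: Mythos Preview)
Your proof is correct and follows essentially the same strategy as the paper's: assume both $\gamma_k^q,\gamma_\ell^q\in B_{top}$, use the two crossing edges $p_k\gamma_k^q$ and $p_\ell\gamma_\ell^q$, and complete the induced matching with an edge of the path $p_i-u_i^q-u_j^q-p_j$ that crosses $(A_{top},B_{top})$. The paper simply phrases the third edge as ``one of the edges of the path must cross the cut'' while you expand this into an explicit three-way case split on the placement of $u_i^q,u_j^q$; the underlying argument and the non-adjacency checks are identical. Your closing remark about Fact~\ref{fact:usameside} is unnecessary---that fact plays no role here, and your case split is already exhaustive without it (it relies only on Fact~\ref{fact:uB} to know the $u$-vertices lie in $B$).
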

\begin{proof}
    Using Fact~\ref{fact:gammaB}, each $\gamma$-vertex is either in $B_{bot}$ or $B_{top}$. Assume $\gamma_k^q$ and $\gamma_\ell^q$ are both in $B_{top}$. Then one of the edges of the path $p_i - u_i^q - u_j^q - p_j$ must cross the cut $(A_{top},B_{top})$. Together with the edges $p_k\gamma_k^q$ and $p_\ell\gamma_\ell^q$, this yields an induced matching of size~3, a contradiction.
\end{proof}

Using Fact~\ref{fact:wb}, the vertex $\omega$ is either in $B_{bot}$ or in $B_{top}$. In the rest of the proof, we will assume without loss of generality that $\omega \in B_{bot}$ (to see why this is without loss of generality, notice that if this is not the case, then by symmetry we may swap the subscripts of $p_i$ and $p_k$, and the subscripts of $p_j$ and $p_\ell$ without affecting the previous arguments).

\begin{fact}\label{fact:uiujtop}
    The vertices $u_i^q$ and $u_j^q$ are both in $B_{top}$.
\end{fact}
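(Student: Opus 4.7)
The plan is to derive a contradiction by producing an induced matching of size $3$ across one of the cuts $(A_{bot}, B_{bot})$ or $(A_{top}, B_{top})$ under the assumption that some vertex among $u_i^q, u_j^q$ lies in $B_{bot}$.  The two endpoints $u_i^q, u_j^q$ play analogous structural roles in the path $p_i$--$u_i^q$--$u_j^q$--$p_j$, and the case $u_i^q \in B_{bot}$ is handled by the same strategy with the anchor edge $p_i u_i^q$ in place of $p_j u_j^q$, so I describe only the case $u_j^q \in B_{bot}$.

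Assuming $u_j^q \in B_{bot}$, the edge $p_j u_j^q$ crosses $(A_{bot}, B_{bot})$ and serves as the anchor of the sought matching.  Because $p_k \in A_{bot}$ and $p_\ell \in B_{bot}$, at least one edge of the path $p_k$--$u_k^q$--$u_\ell^q$--$p_\ell$ crosses the cut, and I would split the analysis according to which one.  In \textbf{Case A} ($u_k^q \in B_{bot}$) the second matching edge is $p_k u_k^q$; in \textbf{Case B} ($u_k^q \in B_{top}$ and $u_\ell^q \in B_{bot}$) it is $u_k^q u_\ell^q$; in \textbf{Case C} ($u_k^q, u_\ell^q \in B_{top}$) it is $u_\ell^q p_\ell$.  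A direct adjacency check using the quartet structure shows that, in each case, the two chosen edges already form an induced matching of size~$2$.

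For the third edge, the set of admissible endpoints (those non-adjacent to all four existing endpoints) reduces, via an adjacency audit, to a small subset of $p$- and $\gamma$-vertices of quartet~$q$: in Case~A the subset is $\{p_i, p_\ell, \gamma_i^q, \gamma_\ell^q\}$; in Case~B it is $\{p_i, \gamma_i^q, \gamma_k^q, \gamma_\ell^q\}$; in Case~C it is $\{p_i, p_k, \gamma_i^q, \gamma_k^q\}$.  In Cases~A and~B I would check that at least one candidate edge induced by this subset necessarily crosses $(A_{bot}, B_{bot})$: the positions of $\gamma_i^q, \gamma_k^q, \gamma_\ell^q$, constrained by Fact~\ref{fact:gkorglbot} (which places at least one of $\gamma_k^q, \gamma_\ell^q$ in $B_{bot}$ and at least one of $\gamma_i^q, \gamma_j^q$ in $B_{top}$), always supply such a crossing edge, yielding the contradiction.

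The main obstacle is Case~C: the candidate third edges are $p_i \gamma_i^q$, $p_k \gamma_k^q$, and $\gamma_i^q \gamma_k^q$, and the configuration $\gamma_i^q, \gamma_k^q \in B_{top}$ with $\gamma_\ell^q \in B_{bot}$ (forced by Fact~\ref{fact:gkorglbot}) makes none of these cross $(A_{bot}, B_{bot})$.  To handle this sub-case I would abandon $(A_{bot}, B_{bot})$ and work with $(A_{top}, B_{top})$ instead: since $p_j, \gamma_i^q, u_k^q \in B_{top}$ while $u_j^q \in B_{bot} \subseteq A_{top}$ and $p_i, p_k \in A_{top}$, the triple $\{p_j u_j^q,\; p_i \gamma_i^q,\; p_k u_k^q\}$ consists of three cross-edges whose endpoints are pairwise non-adjacent by the quartet structure, delivering the required induced matching of size~$3$.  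The extra vertex $\omega$ does not appear in any of these matchings; its importance is confined to the earlier Facts that pinned down where the $u$- and $\gamma$-vertices must lie.
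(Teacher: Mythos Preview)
Your approach is genuinely different from the paper's: the paper exploits $\omega$ directly, arguing that if $u_i^q\in B_{bot}$ then $\{p_i u_i^q,\ u_x^q\omega,\ \gamma\text{-}\gamma\}$ (with $u_x^q\in B_{top}$ supplied by Fact~\ref{fact:usameside}, and a $\gamma$-$\gamma$ crossing edge supplied by Fact~\ref{fact:gkorglbot}) is an induced matching in $G[A_{bot},B_{bot}]$; the $u_j^q$ case then uses $u_i^q\omega$.  This avoids any case split on the positions of $u_k^q,u_\ell^q$ and dispatches each vertex in two lines.  Your argument instead avoids $\omega$ entirely and performs a three-way case analysis on the $p_k$--$u_k^q$--$u_\ell^q$--$p_\ell$ path.

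Your treatment of the case $u_j^q\in B_{bot}$ is correct, including the switch to $(A_{top},B_{top})$ in the problematic sub-case of Case~C.  However, the claimed symmetry between $u_i^q$ and $u_j^q$ fails: $p_j\in B_{top}$ while $p_i\in A_{mid}$, so their roles in the decomposition are not analogous.  Concretely, your Case~C resolution relies on $p_j u_j^q$ crossing $(A_{top},B_{top})$, which holds because $p_j\in B_{top}$ and $u_j^q\in B_{bot}\subseteq A_{top}$.  The ``same strategy'' for $u_i^q\in B_{bot}$ would use $p_i u_i^q$, but both endpoints lie in $A_{top}$, so this edge does not cross and the argument breaks down precisely in the sub-case $\gamma_j^q,\gamma_k^q\in B_{top}$ (which is consistent with Fact~\ref{fact:gkorglbot} once $\gamma_\ell^q\in B_{bot}$).

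The gap is repairable without $\omega$: your ``admissibility'' criterion (non-adjacency to \emph{all four} existing endpoints) is stronger than what an induced matching in $G[A_{bot},B_{bot}]$ requires.  In the failing sub-case for $u_i^q$, the edge $\gamma_k^q\gamma_\ell^q$ crosses $(A_{bot},B_{bot})$ and, together with $p_i u_i^q$ and $u_\ell^q p_\ell$, forms an induced matching in the bipartite graph---even though $\gamma_\ell^q$ is adjacent to $p_\ell$, both lie in $B_{bot}$, so this adjacency is irrelevant.  Alternatively, one may first establish $u_j^q\in B_{top}$ and then handle $u_i^q\in B_{bot}$ separately.  Either way, the $u_i^q$ case needs its own argument; it is not symmetric to the $u_j^q$ case.
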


\begin{proof}
    Consider $u_i^q$ first.  Using Fact~\ref{fact:uB}, $u_i^q$ is either in $B_{bot}$ or in $B_{top}$. Assume by contradiction that it is in $B_{bot}$. Across the cut $(A_{bot},B_{bot})$, the following 3 edges form an induced matching:
    \begin{itemize}[nosep, leftmargin=*]
        \item The edge $p_i u_i^q$.
        \item One of the edge $u_x^q \omega$ with $x \in \{j,k,\l\}$. At least one $u$ vertices is in $B_{top}$ by Fact~\ref{fact:usameside}.
        \item An edge joining a $\gamma$-vertex in $B_{top}$ to either $\gamma_k^q$ or $\gamma_\ell^q$ in $B_{bot}$. Such an edge exists using Fact~\ref{fact:gkorglbot}.
    \end{itemize}
    We may thus assume that $u_i^q \in B_{top}$.  
    
    Next assume for contradiction that $u_j^q \in B_{bot}$.  Then across the cut $(A_{bot}, B_{bot})$ we may form the induced matching containing the edges $p_j u_j^q$, $u_i^q \omega$, and as before an edge joining a $\gamma$-vertex in $B_{top}$ to either $\gamma_k^q$ or $\gamma_\ell^q$ in $B_{bot}$. 
\end{proof}





\begin{fact}\label{fact:gjorgktop}
    At least one of $\gamma_j^q,\gamma_k^q$ lies in $B_{top}$.
\end{fact}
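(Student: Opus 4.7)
The plan is a proof by contradiction: suppose both $\gamma_j^q$ and $\gamma_k^q$ lie in $B_{bot}$, and then exhibit an induced matching of size $3$ in $G[A_{bot},B_{bot}]$, contradicting $\mimw(T)\le 2$.

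The natural first two edges to take are $p_j\gamma_j^q$ and $p_k\gamma_k^q$. To see they cross the cut $(A_{bot},B_{bot})$, note that $p_k\in A\subseteq A_{bot}$ by the setup before Claim~\ref{claim:gammas}, and $p_j\in B_{top}\subseteq A_{bot}$ since $B_{top}$ and $B_{bot}$ partition $B$; by the contradiction hypothesis, $\gamma_j^q,\gamma_k^q\in B_{bot}$. They already form an induced pair because within quartet $q$ the only $\gamma$-neighbour of $p_j$ (resp.\ $p_k$) is $\gamma_j^q$ (resp.\ $\gamma_k^q$), and $p_jp_k\notin E(G)$ since $P$ is independent.

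For the third edge I would use $\omega u_i^q$. By the WLOG choice made just after Fact~\ref{fact:wb}, $\omega\in B_{bot}$, while Fact~\ref{fact:uiujtop} puts $u_i^q\in B_{top}\subseteq A_{bot}$; and $\omega u_i^q\in E(H)$ since $\omega$ is adjacent to every $u$-vertex. Induced-ness with the previous two edges follows from two easy observations about the construction: $\omega$ has no neighbour outside $U$, so it is non-adjacent to $p_j,p_k,\gamma_j^q,\gamma_k^q$; and the only same-quartet neighbours of $u_i^q$ are $p_i$ and $u_j^q$, so $u_i^q$ is non-adjacent to $p_j,p_k,\gamma_j^q,\gamma_k^q$ as well. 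Hence $\{p_j\gamma_j^q,\,p_k\gamma_k^q,\,\omega u_i^q\}$ is an induced matching of size $3$ in $G[A_{bot},B_{bot}]$, the desired contradiction.

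The main obstacle is purely bookkeeping: one has to correctly route each of $\omega,u_i^q,p_j,p_k,\gamma_j^q,\gamma_k^q$ to its side of the cut $(A_{bot},B_{bot})$ using the previously established Facts, and then check the handful of potential chords in the induced $6$-vertex subgraph. No new structural insight beyond the accumulated facts seems to be needed, and this is exactly the role the extra vertex $\omega$ was introduced to play, so the same trick should generalize to the following symmetric Facts.
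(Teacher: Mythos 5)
Your proposal is correct and uses exactly the same induced matching as the paper, namely $\{p_j\gamma_j^q,\,p_k\gamma_k^q,\,\omega u_i^q\}$ across the cut $(A_{bot},B_{bot})$, relying on the same prior Facts (\ref{fact:gammaB}, \ref{fact:wb}, \ref{fact:uiujtop}) to place the endpoints; the only cosmetic slip is writing $G[A_{bot},B_{bot}]$ where $H[A_{bot},B_{bot}]$ is meant, since $\omega\in V(H)\setminus V(G)$.
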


\begin{proof}
    Using Fact~\ref{fact:gammaB}, each $\gamma$-vertex is either in $B_{bot}$ or $B_{top}$. Assume for contradiction that $\gamma_j^q$ and $\gamma_k^q$ are both in $B_{bot}$. Across the cut $(A_{bot},B_{bot})$, the following 3 edges form an induced matching: $p_j\gamma_j^q$, $p_k\gamma_k^q$, and $u_i^q \omega$ (we know that $u_i^q \in B_{top}$ using Fact~\ref{fact:uiujtop}).
\end{proof}

\begin{fact}\label{fact:ukbot}
    The vertex $u_k^q$ is in $B_{bot}$.
\end{fact}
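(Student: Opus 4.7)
The plan is to assume $u_k^q \in B_{top}$ and derive a contradiction by exhibiting an induced matching of size three across the cut $(A_{top}, B_{top})$. Combining this assumption with Fact~\ref{fact:uiujtop} puts all three of $u_i^q, u_j^q, u_k^q$ in $B_{top}$, and then Fact~\ref{fact:usameside} forces $u_\ell^q \in B_{bot}$. The two edges $p_i u_i^q$ and $p_k u_k^q$ both cross $(A_{top}, B_{top})$ because $p_i, p_k \in A_{top}$ and $u_i^q, u_k^q \in B_{top}$, and their four endpoints already form an induced matching: within quartet $q$ the only edges among them are $p_i u_i^q$ and $p_k u_k^q$ themselves, since $u_i^q u_k^q$ is not a quartet edge (within $q$ only $u_i^q u_j^q$ and $u_k^q u_\ell^q$ are present), and each $p_x$ is only adjacent to vertices of the form $u_x^{q'}$ or $\gamma_x^{q'}$.

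To complete the matching with a third edge, I would use Fact~\ref{fact:gammaB} to place $\gamma_j^q, \gamma_\ell^q \in B$ and case-split on their sides. If $\gamma_\ell^q \in B_{top}$, take $e_3 = p_\ell \gamma_\ell^q$ (it crosses since $p_\ell \in A_{top}$); otherwise, if $\gamma_j^q \in B_{bot}$, take $e_3 = p_j \gamma_j^q$ (it crosses since $p_j \in B_{top}$ and $B_{bot} \subseteq A_{top}$); in the only remaining configuration $\gamma_j^q \in B_{top}$ and $\gamma_\ell^q \in B_{bot}$, take the $\Gamma$-clique edge $e_3 = \gamma_j^q \gamma_\ell^q$, which crosses the cut. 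In each case the endpoints of $e_3$ lie in $\{p_j, p_\ell, \gamma_j^q, \gamma_\ell^q\}$, and these have no $H$-edges to $\{p_i, u_i^q, p_k, u_k^q\}$: within quartet $q$ there are no $u$-$\gamma$ edges, no $p$-$\gamma$ edges across distinct subscripts, and the $p$-vertices are pairwise independent. Hence $\{p_i u_i^q,\, p_k u_k^q,\, e_3\}$ is an induced matching of size three in $H[A_{top}, B_{top}]$, contradicting $\mimw(T) \leq 2$.

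The main obstacle is the configuration $(\gamma_j^q, \gamma_\ell^q) = (B_{top}, B_{bot})$, where neither $p_\ell \gamma_\ell^q$ nor $p_j \gamma_j^q$ crosses the cut. This is precisely where the $\Gamma$-clique structure of $H$ becomes indispensable: without the same-quartet $\gamma$-$\gamma$ edges we would have no suitable third edge in this configuration, but the clique supplies exactly the crossing edge we need. Once all four sub-configurations of $(\gamma_j^q, \gamma_\ell^q)$ are handled, the contradiction is complete and $u_k^q$ must lie in $B_{bot}$.
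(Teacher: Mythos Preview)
Your proof is correct but takes a different route from the paper's. The paper uses the three edges $p_i u_i^q$, $u_\ell^q u_k^q$, and a $\gamma$--$\gamma$ edge crossing $(A_{top},B_{top})$, the last obtained by combining Fact~\ref{fact:gjorgktop} (one of $\gamma_j^q,\gamma_k^q$ lies in $B_{top}$) with Fact~\ref{fact:gkorglbot} (one of $\gamma_k^q,\gamma_\ell^q$ lies in $B_{bot}$). You instead take $p_i u_i^q$, $p_k u_k^q$, and do a three-case split on the positions of $\gamma_j^q,\gamma_\ell^q$ to find the third edge. The trade-off is clean: your argument never invokes Fact~\ref{fact:gjorgktop} (or Fact~\ref{fact:gkorglbot}), making this fact logically independent of that one, at the price of a small case analysis; the paper's argument has no case split but leans on the earlier facts. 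A minor note: your deduction that $u_\ell^q \in B_{bot}$ is correct but ends up unused in your argument, since $u_\ell^q$ never appears among your matching vertices.
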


\begin{proof}
    Using Fact~\ref{fact:uB}, $u_k^q$ is either in $B_{bot}$ or in $B_{top}$. Assume by contradiction that it is in $B_{top}$. Across the cut $(A_{top},B_{top})$, the following 3 edges form an induced matching:
    \begin{itemize}[nosep, leftmargin=*]
        \item $p_i u_i^q$. We have $u_i^q  \in B_{top}$ by Fact~\ref{fact:uiujtop}.
        \item $u_\ell^q u_k^q$. We are assuming that $u_k^q \in B_{top}$. Moreover, we know that $u_i^q, u_j^q \in B_{top}$ (Fact~\ref{fact:uiujtop}), and that the $u$-vertices are not all in $B_{top}$ (Fact~\ref{fact:usameside}); therefore $u_\ell^q$ is in $B_{bot}$.
        \item An edge joining a $\gamma$-vertex in $B_{bot}$ to either $\gamma_j^q$ or $\gamma_k^q$ in $B_{top}$. Such an edge exists using Fact~\ref{fact:gjorgktop}.
    \end{itemize}
\end{proof}

\begin{fact}\label{fact:gkonlybot}
The vertex $\gamma_k^q$ is the only gamma vertex in $B_{bot}$.
\end{fact}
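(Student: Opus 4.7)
My plan is to reduce Fact~\ref{fact:gkonlybot} to showing that $\gamma_i^q$, $\gamma_j^q$, and $\gamma_\ell^q$ all lie in $B_{top}$. By Fact~\ref{fact:gammaB} the four $\gamma$-vertices are partitioned between $B_{top}$ and $B_{bot}$, and by Fact~\ref{fact:gkorglbot} at least one of $\gamma_k^q, \gamma_\ell^q$ lies in $B_{bot}$; so once $\gamma_\ell^q$ is ruled out, the conclusion that $\gamma_k^q$ is the unique $\gamma$-vertex in $B_{bot}$ follows immediately.

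For each of the three exclusions I will assume the specified $\gamma$-vertex lies in $B_{bot}$ and construct a size-$3$ induced matching in $G[A_{bot}, B_{bot}]$, contradicting $\mim_G(T) \le 2$. In every case the backbone of the matching will be the pair $u_i^q\,\omega$ and $p_k\,u_k^q$: by Facts~\ref{fact:uiujtop},~\ref{fact:wb}, and~\ref{fact:ukbot} both cross the cut, and they already form a size-$2$ induced matching because the only possible additional crossing edges among $\{u_i^q, p_k, \omega, u_k^q\}$ are ruled out by $p_k$ being adjacent in $q$ only to $\gamma_k^q$ and $u_k^q$, by the same-quartet $u$-vertices $u_i^q$ and $u_k^q$ being non-adjacent, and by $\omega$ being non-adjacent to $p_k$ (since $\omega$ only sees $U$).

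The third edge is chosen case by case, and the three sub-cases should be handled in order. First, if $\gamma_i^q \in B_{bot}$, add $p_i\,\gamma_i^q$. Second, if $\gamma_j^q \in B_{bot}$, add $p_j\,\gamma_j^q$, which crosses the cut because $p_j \in B_{top} \subseteq A_{bot}$. Third, if $\gamma_\ell^q \in B_{bot}$, exploit the clique $\Gamma$ and add $\gamma_i^q\,\gamma_\ell^q$; this step relies on the already-established $\gamma_i^q \in B_{top}$, which is why the sub-cases are ordered this way. In each case the verification reduces to checking that no pair of endpoints from distinct matching edges is joined by a cut-crossing edge; this follows from the observations above together with the facts that $p_i, p_j$ are adjacent only to subscript-matching $u$- and $\gamma$-vertices of $q$, and that same-quartet $u$-$\gamma$ pairs are non-adjacent.

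The main obstacle will be guaranteeing that the matchings are truly induced: because $\Gamma$ is a clique and $\omega$ is universal on $U$, it is easy to inadvertently introduce a cut-crossing edge between endpoints of distinct matching edges. The choice of $(u_i^q\,\omega,\, p_k\,u_k^q)$ as the universal backbone is deliberate — it keeps $p_k$ away from both $\omega$ and $\Gamma$, and keeps $u_i^q$ away from every $\gamma$-vertex of $q$ — so it is compatible with each of the three candidate third edges above.
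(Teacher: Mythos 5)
Your proof is correct and follows essentially the same approach as the paper: the paper also uses the backbone $\{u_i^q\,\omega,\ p_k\,u_k^q\}$, which crosses the cut by Facts~\ref{fact:uiujtop}, \ref{fact:wb}, and~\ref{fact:ukbot}. The only difference is the third edge: the paper uses a single uniform choice — a $\gamma$--$\gamma$ edge joining whichever of $\gamma_i^q, \gamma_j^q$ lies in $B_{top}$ (guaranteed by the second clause of Fact~\ref{fact:gkorglbot}) to the offending $\gamma$-vertex in $B_{bot}$ — which works for all three of your sub-cases at once and avoids the ordering constraint you impose. Your three-way split with $p$--$\gamma$ edges in the first two cases is also valid, but the paper's version is shorter and does not need the earlier sub-cases to have already concluded before handling $\gamma_\ell^q$.
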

\begin{proof}
    Assume by contradiction that there is a $\gamma$-vertex other than $\gamma_k^q$ ib $B_{bot}$. Across the cut $(A_{bot},B_{bot})$, the following 3 edges form an induce matching:
    \begin{itemize}[nosep, leftmargin=*]
        \item $p_k u_k^q$. We know that $u_k^q$ is in $B_{bot}$ by Fact~\ref{fact:ukbot}.
        \item $u_i^q \omega$.  We know that $u_i^q$ is in $B_{top}$ by Fact~\ref{fact:uiujtop}.
        \item  An edge joining either $\gamma_i^q$ or $\gamma_j^q$ in $B_{top}$ (Fact~\ref{fact:gkorglbot}) to a $\gamma$-vertex in $B_{bot}$ that is not $\gamma_k^q$.
    \end{itemize}
\end{proof}

We can finally conclude the proof.  By combining Fact~\ref{fact:gammaB} and Fact~\ref{fact:gkonlybot}, we know that $\gamma_i^q$ and $\gamma_\ell^q$ are both in $B_{top}$. Across the cut $(A_{top},B_{top})$, the following 3 edges form an induced matching:
$p_i \gamma_i^q$, $p_\ell \gamma_\ell^q$, and $\omega u_j^q$ (since $u_j^q$ is in $B_{top}$ by Fact~\ref{fact:uiujtop}).
%
This completes the proof: in all cases we obtain an induced matching of size~3.

\end{proof}

\begin{lemma}
    If there is a caterpillar satisfying the \textsc{UQC} instance $(Q,P)$, then $\lmw(G) = 2$.
\end{lemma}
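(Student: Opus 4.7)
The plan is to build, from a caterpillar $T$ satisfying the \textsc{UQC} instance $(Q,P)$, a caterpillar branch decomposition $T'$ of $G$ whose mim-width is at most $2$. The remark preceding Proposition~\ref{prop:mimw-lmimw} already yields $\mimw(G) \geq 2$ and hence $\lmw(G) \geq \mimw(G) \geq 2$, so constructing such a $T'$ is enough to conclude $\lmw(G) = 2$.

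I would define the linear order analogously to the proof of Lemma~\ref{lem:lomw1}. Let $\leq$ be a total order on $P$ realized by $T$, and for each $p_i \in P$ set
$$C_i = \{p_i\} \cup \{u_i^q, \gamma_i^q : q \in Q,\ p_i \in q\}.$$
Extend $\leq$ to a total order $\leq'$ on $V(G)$ by placing all of $C_i$ before all of $C_j$ whenever $p_i < p_j$. Within each $C_i$, arrange the elements as
$$(\text{left } u\text{-vertices}) \;<\; (\text{left } \gamma\text{-vertices}) \;<\; p_i \;<\; (\text{right } \gamma\text{-vertices}) \;<\; (\text{right } u\text{-vertices}),$$
where for every $q = [p_i p_j \mid p_k p_\ell] \in Q$ with $p_i \in q$, both $u_i^q$ and $\gamma_i^q$ lie on the right of $p_i$ if $p_i, p_j \leq p_k, p_\ell$ in $\leq$, and on the left otherwise; the order within each of the four blocks is arbitrary. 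Let $T'$ be the ternary caterpillar realizing $\leq'$.

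The first key step is a general reduction: any induced matching of $G$ of size at least $2$ must have all of its non-$P$ endpoints in a single quartet. Indeed, any two $U$-vertices, a $U$-vertex and a $\Gamma$-vertex, or two $\Gamma$-vertices coming from distinct quartets are always adjacent in $G$, so any two matching edges with heavy endpoints in different quartets would admit a chord. Consequently, to bound $\mimw_G(T') \leq 2$, it suffices to show that for every cut $(A_e, B_e)$ of $T'$ and every quartet $q$, the cross-edges within the $12$-vertex gadget of $q$ contain no induced matching of size $3$.

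Fixing such a quartet $q = [p_i p_j \mid p_k p_\ell]$ and assuming by symmetry that $p_i, p_j < p_k, p_\ell$ in $\leq$ with $p_i < p_j$ and $p_k < p_\ell$, the construction places the $12$ gadget vertices of $q$ in the relative order
$$p_i,\ \gamma_i^q,\ u_i^q,\ \ p_j,\ \gamma_j^q,\ u_j^q,\ \ u_k^q,\ \gamma_k^q,\ p_k,\ \ u_\ell^q,\ \gamma_\ell^q,\ p_\ell.$$
A linear cut of $T'$ splits this sequence into a prefix and a suffix, so there are only $13$ essentially distinct gadget-bipartitions. For each one I would enumerate which of the $16$ gadget edges cross the cut and verify that every matching of size $3$ contains a chord. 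The main obstacle is precisely this case analysis, which is finite but repetitive; it rests on two recurring chord patterns: the $K_4$ on the four $\gamma$-vertices, which precludes two simultaneously crossing $\gamma$-$\gamma$ matching edges, and the gadget edges $p_a u_a^q$, which reappear as chords whenever $p_a$ and $u_a^q$ end up on opposite sides of the cut while being used in two different matching edges. Once the $13$ cases are discharged, we obtain $\lmw(G) \leq 2$, and combined with the lower bound this yields $\lmw(G) = 2$.
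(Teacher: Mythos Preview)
Your ``general reduction'' step has a genuine gap. You correctly prove that every induced matching \emph{of $G$} (the full graph) has all its non-$P$ endpoints in one quartet, but mim-width is about induced matchings of the bipartite graph $G[A_e,B_e]$, in which all edges inside $A_e$ and inside $B_e$ have been deleted. Those can be strictly larger and need not satisfy your single-quartet conclusion. Concretely, if $p_i,p_j,p_m\in A_e$ and $u_i^{q},u_j^{q'},u_m^{r}\in B_e$ with $i,j,m$ pairwise distinct and $q,q',r$ pairwise distinct, then $\{p_iu_i^{q},\,p_ju_j^{q'},\,p_mu_m^{r}\}$ is an induced matching of size~$3$ in $G[A_e,B_e]$: the edges $u_i^{q}u_j^{q'}$, $u_i^{q}u_m^{r}$, $u_j^{q'}u_m^{r}$ all lie inside $B_e$ and are invisible there, while none of the cross pairs $p_xu_y^{\,\cdot}$ with $x\neq y$ are edges of $G$. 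The non-$P$ endpoints span three quartets, so the jump ``hence it suffices to check one $12$-vertex gadget'' does not follow from what you wrote.

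The missing ingredient is specific to your caterpillar $T'$, and the paper supplies it first: since each spine cut is a prefix of $\leq'$ and the blocks $C_i$ are consecutive, a cut can lie inside at most one block $C_i$, so at most \emph{one} of the six endpoints of a putative size-$3$ matching in $G[A_e,B_e]$ can belong to $P$. With that in hand, one can run a cross-chord argument in $G[A_e,B_e]$ (if some matching edge has its two typed endpoints of different quartets, one derives a cross-chord among the remaining edges) to force all non-$P$ endpoints into a common type~$q$. After inserting this step your reduction to the gadget becomes valid.

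Two side remarks on the rest. First, your left/right placement of $u_i^{q}$ and $\gamma_i^{q}$ relative to $p_i$ inside $C_i$ is more structure than needed: the paper takes an \emph{arbitrary} order within each $C_i$ and still gets $\mimw_G(T')\le 2$. Second, once reduced to a single quartet with at most one $P$-endpoint, the paper avoids your case enumeration entirely. It argues directly that two $\gamma$--$\gamma$ edges are impossible ($\Gamma$ is a clique), and that two $u$--$u$ edges would witness $p_ip_k\mid_{T} p_jp_\ell$, contradicting that $T$ satisfies $q$; hence the three edges must be exactly one $u$--$u$, one $\gamma$--$\gamma$, and one $p$--$u$ or $p$--$\gamma$, and a two-line look at the last edge produces a chord. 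This is shorter and independent of any particular internal ordering of the $C_i$'s.
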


\begin{proof}
    As we already observed, $\mimw(G) \geq 2$ because it contains induced cycles of length $6$, implying $\lmw(G) \geq 2$.  We focus on the upper bound.
    
    Let $T$ be a caterpillar satisfying $(Q, P)$, and let $\leq$ be a total order on $P$ that $T$ realizes. For each $p_i \in P$, define
\[
C_i \;=\; \{p_i\} \;\cup\; \{\,u_i^q \mid q \in Q,\; p_i \in q\,\} \;\cup\; \{\,\gamma_i^q \mid q \in Q,\; p_i \in q\,\},
\]
that is, $C_i$ consists of all vertices in $P \cup U \cup \Gamma$ whose index is $i$. We extend $\leq$ to a total order $\leq'$ on $P \cup U \cup \Gamma$ as follows: for all $p_i,p_j \in P$ with $p_i < p_j$, and for all $x \in C_i$, $y \in C_j$, we set $x <' y$; within each $C_i$, the order is arbitrary. Let $T'$ be the ternary caterpillar realizing $\leq'$. We prove that $\mimw_G(T') \leq 2$, i.e., $\mim_G(A_e,B_e) \leq 2$ for every edge $e$ of $T'$.

Because $T'$ is a caterpillar, it suffices to consider the cuts induced by the edges of its spine. Let $e$ be an edge of the spine, and let $(A_e,B_e)$ be the cut induced by $e$. Since $T'$ realizes $\leq'$, one side of the cut is an initial segment of $\leq'$. Thus, either for all $a \in A_e$ and $b \in B_e$ we have $a <' b$, or for all $a \in A_e$ and $b \in B_e$ we have $b <' a$.

Let $a_1b_1,a_2b_2,a_3b_3 \in E(G)$ be three edges with pairwise disjoint endpoints such that $a_1,a_2,a_3 \in A_e$ and $b_1,b_2,b_3 \in B_e$. Suppose, for contradiction, that these three edges form an induced matching in $G[A_e,B_e]$.

First, note that at most one of the six vertices lies in $P$. Indeed, a vertex $p_i \in P$ is adjacent only to vertices of $C_i$. If $p_i$ were among the six vertices, 
then $e$ would separate $p_i$ from another element of $C_i$; 
by the construction of $\leq'$, this can happen for at most one index $i$.

We say that a vertex $v$ of $G$ is of \emph{type $q$} for a quartet $q$ if $v$ is one of the eight vertices in $U \cup \Gamma$ corresponding to $q$ (i.e., the type is the superscript $q$). Suppose $a_1b_1$ is an edge with $a_1$ of type $q$ and $b_1$ of type $q' \neq q$. In our construction, the only vertices not adjacent to $a_1$ are those in $P$ or of type $q$; hence $b_2$ and $b_3$ must each lie in $P$ or be of type $q$. Symmetrically, the only vertices not adjacent to $b_1$ are those in $P$ or of type $q'$, so $a_2$ and $a_3$ must each lie in $P$ or be of type $q'$. If, without loss of generality, $a_3$ is the only possible vertex possibly in $P$, then $a_2$ is of type $q'$ and $b_3$ is of type $q$, which forces $a_2b_3 \in E(G)$, contradicting that the three edges form an induced matching.

Therefore, at most one of the six vertices in the induced matching is in $P$ and the remaining vertices in $U \cup \Gamma$ all have the same type $q=[p_ip_j \mid p_k p_\ell]$. The three matching edges must then be of the following kinds: $\gamma$–$\gamma$, $u$–$u$, $p$–$u$, or $p$–$\gamma$.

We cannot have two $\gamma$–$\gamma$ edges, since $\Gamma$ is a clique and there would be an extra edge between the $\gamma$-endpoints, violating the induced property.

Assume there are two $u$–$u$ edges. The only possibilities are $u_i^q u_j^q$ and $u_k^q u_\ell^q$. Without loss of generality, let $u_i^q,u_k^q \in A_e$ and $u_j^q,u_\ell^q \in B_e$. By the definition of $\leq'$, this implies $u_i^q,u_k^q <' u_j^q,u_\ell^q$, hence $p_i,p_k < p_j,p_\ell$ in $\leq$, which contradicts that $T$ satisfies the quartet $q$.

Since at most one of the six vertices lies in $P$, it follows that \emph{exactly one} matching edge is of type $u$–$u$ (say $a_1b_1$), \emph{exactly one} is of type $\gamma$–$\gamma$ (say $a_2b_2$), and the remaining one (say $a_3b_3$) is either $p$–$u$ or $p$–$\gamma$.

Without loss of generality, let $a_3=p_i \in P$. If $b_3=\gamma_i^q$, then $b_3$ is adjacent to $a_2$ (both are $\gamma$-vertices), so the matching is not induced. Otherwise, if $b_3=u_i^q$, then $a_1b_1$ must be $u_k^q u_\ell^q$. Since $e$ separates $p_i$ from $u_i^q$, the cut lies inside $C_i$; together with $u_k^q u_\ell^q$ crossing the cut, this forces $p_i$ to lie between $p_k$ and $p_\ell$ in the order $\leq$, contradicting that $T$ satisfies the quartet $q$. This contradiction shows that no induced matching of size $3$ can appear in $G[A_e,B_e]$, and hence $\mim_G(A_e,B_e) \leq 2$ for every spine edge $e$ of $T'$. Therefore, $\mimw_G(T') \leq 2$.
\end{proof}

\begin{lemma}
    If there is a caterpillar satisfying the \textsc{UQC} instance $(Q,P)$, then $\mimw(H) = 2$.
\end{lemma}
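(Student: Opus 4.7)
My plan is to prove the upper bound $\mimw(H)\le 2$; the matching lower bound $\mimw(H)\ge 2$ has already been observed in the paper (via the induced $6$-cycles listed right before Proposition~\ref{prop:mimw-lmimw}). I would build on the caterpillar $T'$ of the previous lemma, but fix the within-$C_i$ ordering more carefully and then attach $\omega$ as a pendant leaf. Concretely, I take the order $\leq'$ on $P\cup U\cup\Gamma$ exactly as in the previous lemma, but require that within each $C_i$ the internal order is $u_i^{q_1},\dots,u_i^{q_{m_i}},\ p_i,\ \gamma_i^{q_1},\dots,\gamma_i^{q_{m_i}}$ (all $u$-vertices of $C_i$ first, then $p_i$, then all $\gamma$-vertices). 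Let $T'$ be the caterpillar realizing this $\leq'$, and construct $T''$ from $T'$ by subdividing an arbitrary edge of $T'$ and attaching $\omega$ as a new pendant leaf at the subdivision vertex. The resulting $T''$ is ternary with leaf set $V(H)$, hence a branch decomposition of $H$.

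To show $\mimw_H(T'')\le 2$, I would inspect every cut $(A',B')$ of $T''$. Leaf cuts trivially have mim-value at most $1$. Every other cut is obtained from some cut $(A,B)$ of $T'$ by adding $\omega$ to one side, say $\omega\in B'$. Suppose for contradiction that $H[A',B']$ contains an induced matching $M$ of size $3$. If $\omega$ is not an endpoint of any edge of $M$, then $M$ is also an induced $3$-matching of $G[A,B]$, contradicting the previous lemma. Otherwise $M$ must contain an edge $\omega u_i^q$ with $u_i^q\in A'$ for some quartet $q=[p_ip_j\mid p_kp_\ell]$, and the remaining two edges $a_1b_1,a_2b_2$ lie in $G[A,B]$.

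Absence of the cross-bipartite edges $\omega a_\ell$ and $u_i^q b_\ell$ forces $a_1,a_2\in P\cup\Gamma$ and $b_1,b_2\in X:=\{p_m:m\neq i\}\cup\{u_k^q,u_\ell^q\}\cup\{\gamma_n^q:n\in\{i,j,k,\ell\}\}$, leaving six possible types for a pair $(a_\ell,b_\ell)$. Two structural observations then eliminate every combination. First, the chosen within-$C_m$ ordering forces $u_m^{q'}<'p_m<'\gamma_m^{q'}$ for every $m$ and $q'$, which makes the bipartite configurations ``$p_m\in A,\ u_m^q\in B$'' and ``$\gamma_m^{q'}\in A,\ p_m\in B$'' infeasible, immediately ruling out two of the six edge-types. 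Second, because $T'$ is a caterpillar and the clusters $C_m$ are consecutive in $\leq'$, the cut splits at most one $C_m$; hence an edge of type $(p_m,\gamma_m^q)$, which requires $C_m$ to be split, can occur at most once among the two remaining edges. The only other admissible edge-types are $\Gamma$--$U$ and $\Gamma$--$\Gamma$ across different quartet superscripts, and each always produces a cross-bipartite $\gamma$--$\gamma$ or $\gamma$--$u$ edge among the six endpoints (through the $\Gamma$-clique or through $u$--$\gamma$ adjacency between distinct quartets). A short case analysis over the admissible type-pairs then shows that every such pair either needs two distinct $C_m$'s to be split (impossible) or produces such a cross-bipartite edge, contradicting $M$ being induced.

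The technical heart of the proof is this final case analysis, together with the reason the specific ordering is needed. Under an arbitrary within-$C_i$ ordering, one could realize bad induced $3$-matchings such as $\{\omega u_i^q,\ p_k u_k^q,\ \gamma_k^q\gamma_\ell^q\}$ in a cut whose split sits inside $C_k$; forcing $u_m^{q'}<'p_m$ for all $m,q'$ removes exactly the configuration $p_m\in A,\ u_m^q\in B$ that such matchings require. The main obstacle I anticipate is integrating this ordering constraint with the enumeration of $2$-matchings in $G[A,B]$ to systematically dismiss every remaining candidate.
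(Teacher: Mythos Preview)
Your approach has a genuine gap in the orientation handling. You write ``say $\omega\in B'$'' and then claim that the within-$C_m$ ordering $u_m^{q'}<'p_m<'\gamma_m^{q'}$ makes the configuration ``$p_m\in A,\ u_m^q\in B$'' infeasible. But this inference only goes through when $A$ is the \emph{initial} segment of $\leq'$; nothing in your setup forces that. With $\omega$ attached at an arbitrary spine edge, there are cuts for which $\omega$ lies on the initial side, so that $A=A'$ is the \emph{final} segment. In that orientation, $u_m^q<'p_m$ is exactly what puts $u_m^q\in B$ and $p_m\in A$.

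Here is a concrete failure of your decomposition. Take any quartet $q=[p_ip_j\mid p_kp_\ell]$ satisfied with $p_\ell<p_k<p_i,p_j$, and suppose $\omega$ is attached to the left of $C_k$. Cut inside $C_k$ between $u_k^q$ and $p_k$. Then $A'$ is the final segment and contains $p_k,\gamma_k^q,u_i^q$, while $B'$ contains $u_k^q,\gamma_\ell^q,\omega$. The three edges $p_ku_k^q$, $\gamma_k^q\gamma_\ell^q$, $u_i^q\omega$ form an induced matching in $H[A',B']$: every potential cross edge is killed because all involved $u,\gamma$-vertices share type~$q$ (so no $u$--$\gamma$ or extra $u$--$u$ edges), and $p_k$ is adjacent to neither $\gamma_\ell^q$ nor $\omega$. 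This is precisely the ``bad'' matching you set out to eliminate; your ordering does not remove it when $A$ is final. The fix is easy---attach $\omega$ at one end of the caterpillar, so that $A$ is always the initial segment---after which your case analysis does go through.

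For comparison, the paper avoids this orientation issue entirely by abandoning the caterpillar: it replaces each leaf $p_i$ of $T$ by a small rooted subtree whose leaves are $C_i$, with $p_i$ attached directly to the root and $C_i\setminus\{p_i\}$ on the other branch. This guarantees that no cut can separate $p_i$ from a neighbor unless one side is $\{p_i\}$ or is contained in $C_i$, so $P$-vertices can never occur as matching endpoints. The remaining analysis (all endpoints of a hypothetical $3$-matching are in $U\cup\Gamma\cup\{\omega\}$ and share a common quartet type) is then short and orientation-free. Your route, once repaired, works but is longer; the paper's non-caterpillar tree buys a cleaner argument at the cost of a less ``linear'' witness.
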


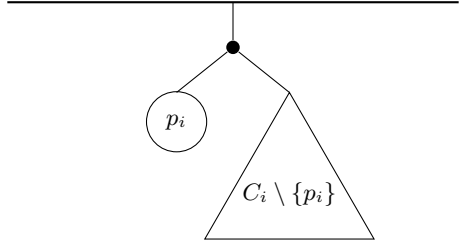
\begin{figure}[ht]
\centering
\begin{tikzpicture}[
    scale=1.5,
    every node/.style={font=\small},
    spine/.style={thick},
    root/.style={circle, fill, inner sep=1.8pt},
    leaf/.style={circle, draw, minimum size=0.8cm}
]

\draw[spine] (-2,0) -- (2,0);

\node[root] (r) at (0,-0.4) {};
\draw (0,0) -- (r);

\node[leaf] (pi) at (-0.5,-1.06) {$p_i$};

\coordinate (Ct) at (0.5,-0.8);

\draw (r) -- (-0.5,-0.8);   
\draw (r) -- (Ct);

\coordinate (Cl) at (-0.25,-2.1);
\coordinate (Cr) at (1.25,-2.1);
\draw (Ct) -- (Cl) -- (Cr) -- cycle;

\node at (0.5,-1.7) {$C_i \setminus \{p_i\}$};

\end{tikzpicture}
\caption{Expansion of $p_i$ in $T$ into the subtree corresponding to $C_i$ in 
$T'$.}
\label{fig:rooted-tree-spine}
\end{figure}

\begin{proof}
   Again, $\mimw(H) \geq 2$ follows from the existence of induced cycles of length 6 and we show that $\mimw(H) \leq 2$.
   
   Let $T$ be a caterpillar satisfying $(Q, P)$. For each $p_i \in P$, define
\[
C_i \;=\; \{p_i\} \;\cup\; \{\,u_i^q \mid q \in Q,\; p_i \in q\,\} \;\cup\; \{\,\gamma_i^q \mid q \in Q,\; p_i \in q\,\},
\]
that is, $C_i$ consists of all vertices in $P \cup U \cup \Gamma$ whose index is $i$.

We define a new tree $T'$ from $T$ by adding the leaf $\omega$ anywhere on the caterpillar. For each $i$, replace the leaf $p_i$ by a rooted subtree whose leaves are exactly the vertices of $C_i$. The leaf $p_i$ is attached directly to the root of this subtree, and all the other leaves lie on the other branch (see Figure~\ref{fig:rooted-tree-spine}).

We prove that $\mimw_G(T') \le 2$, i.e., $\mim_G(A_e,B_e) \le 2$ for every edge $e$ of $T'$. Fix any edge $e$ of $T'$.

Let $a_1b_1,a_2b_2,a_3b_3 \in E(G)$ be three edges with pairwise disjoint endpoints such that $a_1,a_2,a_3 \in A_e$ and $b_1,b_2,b_3 \in B_e$. Suppose, for contradiction, that these three edges form an induced matching in $G[A_e,B_e]$.

First note that none of the six endpoints lies in $P$. Indeed, each $p_i\in P$ is adjacent exactly to the vertices of $C_i$, and by the placement of $p_i$ in $T'$, any cut $(A_e,B_e)$ with $p_i\in A_e$ that separates $p_i$ from some vertex of $C_i$ must be of one of two forms. Either $A_e=\{p_i\}$, then there is no matching of size $3$; or $B_e \subseteq C_i$ because $e$ is in the subtree containing $C_i \setminus \{p_i\}$, then, since $p_i$ is adjacent to every vertex of $C_i$, the matching would not be induced.

As in the previous proof, we say that a vertex $v$ of $G$ is of \emph{type $q$} for a quartet $q$ if $v$ is one of the eight vertices in $U \cup \Gamma$ corresponding to $q$. Suppose $a_1b_1$ is an edge with $a_1$ of type $q$ and $b_1$ of type $q' \neq q$. In our construction of $G$, the only vertices nonadjacent to $a_1$ are those in $P$, those of type $q$, and possibly $\omega$. Since no endpoint is in $P$, it follows that $b_2$ and $b_3$ must be of type $q$ or be $\omega$; symmetrically, $a_2$ and $a_3$ must be of type $q'$ or be $\omega$.  If, without loss of generality, $a_3$ is the only vertex that might be $\omega$, then $a_2$ is of type $q'$ and $b_3$ is of type $q$, which forces $a_2b_3 \in E(G)$, contradicting that the three edges form an induced matching.

Hence at most one of the six endpoints is $\omega$, and all remaining endpoints in $U \cup \Gamma$ belong to a common type (say $q$). Thus the three matching edges must be among the following kinds: $\gamma$–$\gamma$, $u$–$u$, or $\omega$–$u$.

We cannot have two $\gamma$–$\gamma$ edges, since $\Gamma$ is a clique, which would create an extra edge between $\gamma$-endpoints and violate the induced property.

Assume there are two $u$–$u$ edges. The only possibilities are $u_i^q u_j^q$ and $u_k^q u_\ell^q$. Without loss of generality, let $u_i^q,u_k^q \in A_e$ and $u_j^q,u_\ell^q \in B_e$. This implies that in $T'$, the $u_i^q-u_j^q$ path intersects the $u_k^q-u_\ell^q$ path (in particular, they intersect at the ends of $e$).  By our construction of $T'$ from $T$, this in turn implies that in $T'$, the $p_i-p_j$ path intersects the $p_k-p_\ell$ path, and thus $T'$ does not satisfy $q$.  Since our transformation from $T'$ to $T$ does not alter its quartets but only adds new ones, this contradicts that $T$ satisfies $q$.

Moreover, we cannot have simultaneously a $u$–$u$ edge and an $\omega$–$u$ edge, because $\omega$ is adjacent to all $u$-vertices, creating a forbidden cross-edge and destroying the induced property. Since there is at most one $\omega$–$u$ edge, any third edge would have to be either $u$–$u$ or $\gamma$–$\gamma$, both of which have been ruled out. Therefore no induced matching of size~3 can exist, a contradiction.

This proves that $\mim_G(A_e,B_e) \le 2$ for every edge $e$ of $T'$, and hence $\mimw_G(T') \le 2$.
\end{proof}

\thmmim*

\begin{proof}
    We proceed as in Theorem~\ref{thm:sim}.  Take an instance $I$ of \textsc{Betweenness} and let $(Q, P)$ be the \textsc{UQC} instance obtained from Theorem~\ref{thm:steel}.  Then let $G$ and $H$ be obtained from $(Q, P)$ as described above.  If $I$ is a YES-instance, then $(Q, P) \in \textsc{UQC}$ and by Theorem~\ref{thm:steel} some ternary caterpillar satisfies $Q$.  Then by Proposition~\ref{prop:mimw-lmimw} we have $\lmw(G) = 2$ and $\mimw(H) = 2$.  
    If $I$ is a NO-instance, then $(Q, P) \notin \textsc{UQC}$ and by Proposition~\ref{prop:mimw-lmimw} we have $\lmw(G) \geq 3$ and $\mimw(H) \geq 3$.  Thus $I$ is a YES-instance of \textsc{Betweenness} if and only if $\lmw(G) \leq 2$, and if and only if $\mimw(H) \leq 2$, which proves the NP-hardness of our two problems (and membership in NP is easy).  

    As for the ETH lower bound, we observe that $|V(G)| = |P| + 8|Q|$ and $|V(H)| = |P| + 8|Q| + 1$.  Since $|V(G)|$ and $|V(H)|$ are linear in $|P| + |Q|$, Theorem~\ref{thm:uqceth} implies that no time $2^{o(n)}$ algorithm can recognize $\mimw$ 2 and $\lmw$ 2 graphs under the ETH.
\end{proof}

\section{Open problems}
Many questions remain open. The most important one is whether graphs of mim-width~1 and graphs of linear mim-width~1 can be recognized in polynomial time.

Our result also does not imply that deciding whether (linear) $\mimw(G) \le c$ is NP-hard for every fixed constant $c > 2$. One possible direction is to address the following question: Is there a graph operation that transforms any graph $G$ into a graph $G'$ satisfying $\mimw(G') = \mimw(G) + 1$? Alternatively, is there a graph operation that multiplies the mim-width of any graph by a constant factor? The same question remains open for other width parameters as well.

Another question is whether \textsc{UQC} can be used to prove hardness of other graph-to-tree representations?

Finally, can one obtain inapproximability results, for instance via a reduction that yields an arbitrarily large approximation gap?

\printbibliography

\end{document}